\documentclass[12pt]{article}
\usepackage{graphicx}
\def\hybrid{\topmargin 0pt      \oddsidemargin 0pt
        \headheight 0pt \headsep 0pt
       \voffset-1cm
        \textwidth 6.25in       % A4 paper
       \textheight 9.5in       % A4 paper
        \marginparwidth 0.0in
        \parskip 5pt plus 1pt   \jot = 1.5ex}
\catcode`\@=11
\def\marginnote#1{}

\newcount\hour
\newcount\minute
\newtoks\amorpm
\hour=\time\divide\hour by60
\minute=\time{\multiply\hour by60 \global\advance\minute by-\hour}
\edef\standardtime{{\ifnum\hour<12 \global\amorpm={am}%
        \else\global\amorpm={pm}\advance\hour by-12 \fi
        \ifnum\hour=0 \hour=12 \fi
        \number\hour:\ifnum\minute<10 0\fi\number\minute\the\amorpm}}
\edef\militarytime{\number\hour:\ifnum\minute<10 0\fi\number\minute}

\def\draftlabel#1{{\@bsphack\if@filesw {\let\thepage\relax
   \xdef\@gtempa{\write\@auxout{\string
      \newlabel{#1}{{\@currentlabel}{\thepage}}}}}\@gtempa
   \if@nobreak \ifvmode\nobreak\fi\fi\fi\@esphack}
        \gdef\@eqnlabel{#1}}
\def\@eqnlabel{}
\def\@vacuum{}
\def\draftmarginnote#1{\marginpar{\raggedright\scriptsize\tt#1}}

\def\draftlabel#1{{\@bsphack\if@filesw {\let\thepage\relax
   \xdef\@gtempa{\write\@auxout{\string
      \newlabel{#1}{{\@currentlabel}{\thepage}}}}}\@gtempa
   \if@nobreak \ifvmode\nobreak\fi\fi\fi\@esphack}
        \gdef\@eqnlabel{#1}}
\def\@eqnlabel{}
\def\@vacuum{}
\def\draftmarginnote#1{\marginpar{\raggedright\scriptsize\tt#1}}

\def\draft{\oddsidemargin -.5truein
        \def\@oddfoot{\sl preliminary draft \hfil
        \rm\thepage\hfil\sl\today\quad\militarytime}
        \let\@evenfoot\@oddfoot \overfullrule 3pt
        \let\label=\draftlabel
        \let\marginnote=\draftmarginnote
   \def\@eqnnum{(\theequation)\rlap{\kern\marginparsep\tt\@eqnlabel}%
\global\let\@eqnlabel\@vacuum}  }

\def\numberbysection{\@addtoreset{equation}{section}
        \def\theequation{\thesection.\arabic{equation}}}

\def\underline#1{\relax\ifmmode\@@underline#1\else
        $\@@underline{\hbox{#1}}$\relax\fi}

\def\titlepage{\@restonecolfalse\if@twocolumn\@restonecoltrue\onecolumn
     \else \newpage \fi \thispagestyle{empty}\c@page\z@
        \def\thefootnote{\fnsymbol{footnote}} }

\def\endtitlepage{\if@restonecol\twocolumn \else  \fi
        \def\thefootnote{\arabic{footnote}}
        \setcounter{footnote}{0}}  %\c@footnote\z@ }
\relax

\numberbysection
\hybrid

\newfont{\Bbb}{msbm10 scaled 1\@ptsize00}
\newfont{\Bbbb}{msbm7 scaled 1\@ptsize00}
\newcommand{\CC}{\mbox{\Bbb C}}

\newcommand{\DDD}{\raise-1pt\hbox{$\mbox{\Bbbb D}$}}

\newcommand{\UUU}{\raise-1pt\hbox{$\mbox{\Bbbb U}$}}

\newcommand{\ZZ}{\mbox{\Bbb Z}}
\newcommand{\z}{\raise-1pt\hbox{$\mbox{\Bbbb Z}$}}

\newcommand{\sss}{\raise-1pt\hbox{$\mbox{\Bbbb S}$}}

\def\beq{\begin{equation}}
\def\eeq{\end{equation}}
\def\p{\partial}

\def\sn{{\sf sn}}
\def\cn{{\sf cn}}
\def\dn{{\sf dn}}

\newtheorem{theorem}{Theorem}[section]

\newtheorem{lemma-definition}{Lemma-Definition}[section]
\newtheorem{corollary}{Corollary}[section]

\newtheorem{remark}{Remark}[section]

\newtheorem{proposition}{Proposition}[section]

\def\square{\hfill
{\vrule height6pt width6pt depth1pt} \break \vspace{.01cm}}

\begin{document}

\begin{titlepage}

\title{Dispersionless modified DKP hierarchy as 
the Yang-Baxter equation}

\author{
A.~Zabrodin\thanks{
National Research University Higher School of Economics,
20 Myasnitskaya Ulitsa,
Moscow 101000, Russia and
NRC ``Kurchatov institute'', Moscow, Russia;
e-mail: zabrodin@itep.ru}}

\date{July 2026}
\maketitle

\vspace{-7cm} \centerline{ \hfill ITEP-TH-30/26}\vspace{7cm}

\begin{abstract}

We show that the dispersionless version of the modified DKP hierarchy originally defined as the limit of relations for the tau-function of 
the Hirota-Miwa type has an equivalent
reformulation as the Yang-Baxter equation for Baxter's $R$-matrix
of Boltzmann weights for the 8-vertex model.

\end{abstract}

\end{titlepage}

\vspace{5mm}

%

%\newpage
\tableofcontents

\vspace{5mm}

\section{Introduction}

\subsection*{Instead of an epigraph}

It seems appropriate to start with a remembrance. At a conference 
in the late 80s of the last century, 
Volodya Drinfeld gave a talk and wrote 
a rather complicated-looking equation on the blackboard. 
From the audience, he was asked a question: ``But how are you 
going to solve this equation?'' Volodya gave an answer, 
which could be taken as an epigraph to the present article:
``We never {\it solve} equations. This is not our task. 
We only prove that some equations are equivalent to some others''.

\subsection{The main result from a bird's eye view}

The purpose of this article is to connect integrable 
systems that are, as it were, at opposite poles of the theory: 
classical nonlinear partial differential equations and
quantum spin chains or closely related vertex models of 
statistical mechanics.
Let's formulate the main result in a rather informal way, 
avoiding any details (some of which, however, are important):

\begin{itemize}
\item
Integrable hierarchies of partial differential equations (such as
DKP, large BKP and KP) in the limit of zero dispersion
have an equivalent formulation 
in the form of the Yang-Baxter equation for quantum $R$-matrices.
In the case of DKP the $R$-matrix is the Baxter $4\times 4$ matrix
of Boltzmann's weights for the 8-vertex model with elliptic
dependence on spectral parameter. For the other two hierarchies,
the $R$-matrix is a degeneration of the Baxter one, when the
modular parameter $\tau$ tends to $+i0$ (for large BKP) or
to $+i\infty$ (for KP).
\end{itemize}

\noindent
The existence of such a connection seems absolutely unexpected
and, therefore, this statement sounds somewhat defiant.
Indeed, it is difficult to find integrable systems 
that are more distant from each other and different in nature.
In this paper we consider only the general case when all four
Boltzmann weights $a,b,c,d$ are not identically zero. It corresponds
to the DKP hierarchy. The degenerate cases will be discussed
elsewhere. 

A more detailed summary of the contents of the paper is presented
in the next subsection. In order not to interrupt 
the presentation, we do not provide any references to the literature
there. The necessary references are given below in Section 
\ref{section:background}.

\subsection{A more detailed summary}

First of all, we should clarify that we are not dealing literally 
with the DKP hierarchy, but with its ``modified''
version, mDKP. The term ``modified'' is 
understood in the following sense.
Any integrable hierarchy 
(call it ${\cal H}$) has at least one
infinite set of independent variables (times), usually denoted
as ${\bf t}=\{t_1, t_2, t_3, \ldots \}$ supplemented by a 
discrete variable $n\in \ZZ$. Such a hierarchy can be naturally
embedded into a bigger one which is commonly called its 
multi-component extension, ${\cal H}^{\rm (multi )}$. 
Each component is associated with
its own set of times, 
${\bf t}_{\alpha}=\{t_{\alpha ,1}, t_{\alpha ,2}, 
t_{\alpha ,3}, \ldots \}$ and a discrete variable $n_{\alpha}$,
where $\alpha \in \{1, \ldots , N\}$. The embedding 
${\cal H} \subset {\cal H}^{\rm (multi )}$ means
that one simply identifies ${\bf t}$ and $n$ with, say, 
${\bf t}_1$ and $n_1$ in ${\cal H}^{\rm (multi )}$. 
In these terms, the {\it modified} hierarchy
${\cal H}$ (call it $m{\cal H}$) is a minimal 
possible extension of ${\cal H}$ by switching
just one additional discrete variable (say, $n_2$) and freezing all
other variables ${\bf t}_{\alpha}$ with $\alpha \geq 2$ and 
$n_{\alpha}$ with $\alpha \geq 3$. This is the way how we define 
the mDKP hierarchy. To the best of our knowledge,
it was not considered in the literature before. 
The relationship DKP--mDKP is similar to that of the
KP hierarchy and its well known modified version (mKP).

So, the mDKP hierarchy we are dealing with has
the infinite set ${\bf t}=\{t_1, t_2, t_3, \ldots \}$ of continuous
times and two discrete variables $n, \bar n \in \ZZ$. We work
in the bilinear formalism, in which the main hero is the 
tau-function $\tau (n, \bar n, {\bf t})$ which serves as 
a universal dependent variable. It satisfies an infinite
number of bilinear differential and/or difference equations
which can be unified in one generating integral bilinear
relation. 

In this paper we are mainly interested in the so-called 
{\it dispersionless version} of the hierarchy. 
The passage to the dispersionless limit 
consists in introducing a dispersion parameter $\hbar$ and
the re-scaling $t_{k}\to t_{k}/\hbar$ for all $k \geq 1$,
$n\to t_0/\hbar$, $\bar n\to \bar t_0/\hbar$,
with the subsequent limit $\hbar \to 0$. The tau-function itself
does not exist in this limit.
However, a scaling limit of its logarithm still makes sense.
More precisely, introduce the function 
\beq\label{kp6a}
F(t_0, \bar t_0, {\bf t}; \hbar )=\hbar^2 \log \left [
\tau \Bigl (\hbar^{-1}t_0, \hbar^{-1}
\bar t_0,\hbar^{-1}{\bf t}  \Bigr )
\right ]
\eeq
and consider the limit
$
\displaystyle{F(t_0, \bar t_0, {\bf t})=
\lim\limits_{\hbar \to 0} 
F(t_0, \bar t_0, {\bf t}; \hbar )}
$, which is known to exist for sufficiently broad class
of solutions (for example, solutions related to models
of random matrices).
The $F$-function satisfies 
an infinite number of compatible 
highly nonlinear differential equations which 
form the {\it dispersionless hierarchy} and can be
obtained from the bilinear equations for the 
tau-function. They can be naturally written in terms
of the differential operators
\beq\label{int2}
\nabla (z)=\p_{t_0} +\sum_{k\geq 1}\frac{z^{-k}}{k}\, \p_{t_k},
\qquad
\p_0 =\p_{t_0}, \qquad \bar \p_0 =\p_{\bar t_0}
\eeq
acting to the $F$-function.
The main message of this paper is that these equations are
equivalent to the Yang-Baxter equation for the Baxter 
$4\times 4$ quantum
$R$-matrix
\beq\label{R10a}
{\sf R}=\left (
\begin{array}{cccc}
a & 0 & 0& d 
\\ 
0 & b & c & 0
\\ 
0 & c & b & 0
\\ 
d & 0 & 0& a
\end{array}
\right ) ,
\eeq
where $a,b,c,d$ are Boltzmann's weights for the symmetric 
8-vertex model. (This $R$-matrix plays also a crucial role
in Baxter's solution of the quantum $XYZ$ spin chain.)
More precisely, we prove the following theorem.

\begin{theorem}\label{theorem:main}
Set
\beq\label{int3}
\begin{array}{l}
g(z, \zeta )=(z^{-1} -\zeta^{-1})e^{\nabla (z)\nabla (\zeta )F},
\\ \\
w(z)=z^{-1}e^{\nabla (z) \p_0 F}, \quad
v(z)=e^{\nabla (z)\bar \p_0 F},
\quad R=e^{\p_0 \bar \p_0 F},
\\ \\
\displaystyle{
f(z, \zeta )=v(\zeta )\, \frac{Rw(\zeta )+
v(z)g(z, \zeta )}{v(z)w(\zeta )+
Rg(z, \zeta )}.
}
\end{array}
\eeq
Then the Yang-Baxter equation
\beq\label{YB0}
{\sf R}_{12}(z_1, z_2) {\sf R}_{13}(z_1, z_3) {\sf R}_{23}(z_2, z_3)=
{\sf R}_{23}(z_2, z_3) {\sf R}_{13}(z_1, z_3) {\sf R}_{12}(z_1, z_2)
\eeq
for the $R$-matrix (\ref{R10a}) in which
$$
a=f(z, \zeta ), \quad b=g(z, \zeta ), \quad c=R, \quad
d=Rg(z, \zeta )f(z, \zeta )
$$
is equivalent to the dispersionless mDKP hierarchy.
\end{theorem}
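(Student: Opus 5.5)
The plan is to first write the dispersionless mDKP hierarchy explicitly as a finite system of generating equations for $F$. It is obtained as the $\hbar\to 0$ limit of the bilinear relations (Hirota--Miwa type) for $\tau(n,\bar n,{\bf t})$. After shifting times by Miwa vectors, the leading order in $\hbar$ of $\log\tau$ gives exponentials of second derivatives, $e^{\nabla(z)\nabla(\zeta)F}$, $e^{\nabla(z)\p_0F}$, $e^{\nabla(z)\bar\p_0F}$ and $e^{\p_0\bar\p_0F}$. So the hierarchy becomes a set of algebraic relations among the functions $g(z,\zeta)$, $w(z)$, $v(z)$, $R$ of (\ref{int3}). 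I expect these to be of three types.

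\emph{(i)} Three-point relations in $z_1,z_2,z_3$, coming from the DKP part. These are Pfaffian-type relations whose dispersionless form I expect to be an addition formula among the $g(z_i,z_j)$ together with $w$, $v$, $R$.

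\emph{(ii)} Two-point relations involving the discrete shifts in $n$ and $\bar n$, which tie $w(z)$, $v(z)$ and $g(z,\zeta)$ together.

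\emph{(iii)} Relations coming from the symmetry of second derivatives, which are automatic.

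A key intermediate step is to show that the combination $f(z,\zeta)$ defined in (\ref{int3}) is well behaved under these relations. Concretely, the two-point relations should imply an identity of the form $f(z,\zeta)^2-g(z,\zeta)^2=\ldots$ expressing the free-fermion-like invariants of the 8-vertex model in a factorized way.

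The second step is to reduce the Yang--Baxter equation (\ref{YB0}) for the symmetric 8-vertex $R$-matrix to scalar equations. Baxter's classical analysis shows that, for the matrix (\ref{R10a}), the matrix equation is equivalent (for generic nonzero weights) to the system of cubic equations
$$
a_{12}c_{13}a_{23}+d_{12}b_{13}d_{23}=b_{12}a_{13}c_{23}+c_{12}d_{13}b_{23}
$$
and its analogues, where $a_{ij}=a(z_i,z_j)$ and so on. Equivalently, it is the statement that the two invariants
$$
\Delta=\frac{a^2+b^2-c^2-d^2}{2(ab+cd)},\qquad \Gamma=\frac{ab-cd}{ab+cd}
$$
take the same values for all three pairs $(z_i,z_j)$. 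I will use this second formulation: Yang--Baxter holds if and only if $\Delta(z,\zeta)$ and $\Gamma(z,\zeta)$ are independent of $z,\zeta$. Here the weights are $a=f$, $b=g$, $c=R$ and $d=Rgf$. Since $R$ does not depend on spectral parameters, $\Gamma=(1-R^2)/(1+R^2)$ is automatically constant. It remains to show that $\Delta$ is constant, which reads
$$
\frac{f^2+g^2-R^2-R^2g^2f^2}{2fg(1+R^2)}=\mathrm{const}.
$$

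The third step is to substitute the expression for $f$ and to show that this constancy condition is equivalent to the dispersionless mDKP relations from the first step. The natural route is to exhibit the constant $\Delta$ as a combination of $w$, $v$, $R$ that the two-point relations force to be $z$-independent, such as $\p_0\bar\p_0$-type quantities. Then one checks that the three-point equations (i) are recovered from the cubic Baxter relations, with $f$ eliminated via its definition.

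For the converse direction, one runs the same algebra backwards. Given that Yang--Baxter holds with these weights for all $z_1,z_2,z_3$, one expands around $z_3\to\infty$, where $\nabla(z)\to\p_0$ and hence $g\to z^{-1}$-type behaviour. This expansion recovers the two-point relations, and the full statement then yields the three-point ones.

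I expect the main obstacle to be the third step. The difficulty is deriving the correct, complete dispersionless mDKP system, including the mixed $n,\bar n$ relations, and then verifying that the rational expression for $f$ turns the single condition that $\Delta$ is constant into exactly that system, no more and no less. The first thing I would try is to solve the two-point relations for $v(z)g(z,\zeta)$ in terms of $w$, so that $f$ simplifies. I would then check that $f^2$ satisfies a quadratic relation making $\Delta$ collapse to a constant built from $R$ alone.
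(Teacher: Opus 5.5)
Your use of Baxter's invariants is right as a \emph{necessary} condition. With $a=f$, $b=g$, $c=R$, $d=Rgf$ one has $cd/ab=R^2$, and equality of $\Delta$ for all pairs says that $(f(z,\zeta),g(z,\zeta))$ lies on $R^2(f^2g^2+1)-f^2-g^2+Vfg=0$ with $V$ independent of $z,\zeta$. The gap is that you also treat this condition as sufficient, and plan to get hierarchy $\Rightarrow$ Yang--Baxter just by checking that $\Delta$ is constant. That fails in general. Baxter's argument only shows that once ${\sf R}$ and ${\sf R}'$ have equal invariants, ${\sf R}''$ is determined up to a factor. Three matrices taken from the same curve satisfy the Yang--Baxter equation only when their uniformizing parameters are additive, $u'=u+u''$ (as remarked after (\ref{RRR6a})). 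The ingredient that supplies this, and which your plan never introduces, is the uniformization of the dynamical curve: $w(z)=\sn(u(z))$, $v(z)=\sn(u(z)+\eta)$, $R=\sn(\eta)$. Under it the hierarchy gives $g(z,\zeta)=\sn(u(z)-u(\zeta))$ and $f(z,\zeta)=\sn(u(z)-u(\zeta)+\eta)$. These are Baxter's weights at spectral parameter $u(z_i)-u(z_j)$, so additivity is automatic. Relatedly, your expectation that $\Delta$ collapses to a constant built from $R$ alone is wrong. In your normalization $\Delta=V/(2(1+R^2))$, where $V=2e^{\p_0(\bar\p_0-\p_0)F}\p_{t_1}\bar\p_0F$ is independent of $R$; this reflects the two dynamical moduli $\eta$ and $\tau$.

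The same ingredient is missing in the other direction. The algebra you anticipate in Step 3 is short:
\begin{itemize}
\item inserting the definition of $f$ into the curve equation gives the quadratic relation (\ref{YB5}) for $g(z,\zeta)$;
\item antisymmetry $g(\zeta,z)=-g(z,\zeta)$ turns it into (\ref{g100});
\item (\ref{R2}), which only says that $V$ evaluated at $z$ and at $\zeta$ coincide, then yields exactly (\ref{g1a}).
\end{itemize}
But this is only the two-point generating relation. Your Step 1 assumes without justification that the dispersionless hierarchy reduces to finitely many two- and three-point relations. In fact it is the whole family (\ref{bil7a}) for all $P^{\pm}$ with $P^++P^-$ odd. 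Showing that (\ref{g1}) together with the curve implies all of these is the actual content of Theorem \ref{theorem:equivalence}. The paper obtains it, again, from the uniformization: (\ref{bil7a}) becomes the vanishing of the sum of residues of the elliptic function ${\cal F}(u)$ in (\ref{fff}), where oddness of $P^++P^-$ is what makes $1$ a period. Without this step, or a substitute for it, ``recovering the three-point equations from the cubic Baxter relations'' does not reach the full hierarchy.
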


In this form, the theorem describes the case
of ``general position'' when all elements $a,b,c,d$ are not 
identically zero.
This case corresponds to the 8-vertex model (and the $XYZ$ spin chain)
and can be parametrized by means of elliptic functions with 
some modular parameter $\tau$. It turns out that the degenerate
cases $\tau \to +i0$ and $\tau \to +i\infty$, when parametrization
via trigonometric or hyperbolic functions is possible, correspond,
in the sense of Theorem \ref{theorem:main},
to the dispersionless modified large BKP and mKP hierarchies respectively.
However, these degenerate cases are not considered in this paper
in detail.

At first glance, the statement of the theorem 
seems rather surprising since among all the variety of integrable 
systems, there are hardly two more distant from each 
other than classical dispersionless equations and 
quantum spin chains. The secret here is that 
behind both systems there is a certain elliptic curve (or its rational degenerations in the limiting cases). On the side of spin chains and
vertex models it is the {\it spectral curve}, on which the spectral
parameter of $R$-matrices lives. 
On the side of dispersionless hierarchies, it is
the so-called {\it dynamical curve} defined by the following polynomial 
relation between the functions $w(z)$ and $v(z)$:
\beq\label{YB3a}
R^2 \Bigl (v^2(z)w^2(z)+1 \Bigr )
-v^2(z)-w^2(z)+ V v(z)w(z)=0,
\eeq
where
$V=2\, e^{\p_0 (\bar \p_0 -\p_0)F}\p_{t_1}\bar \p_0 F$. In general
it is a smooth elliptic curve isomorphic to the spectral curve
that naturally arises in solving the Yang-Baxter equation.
The complete proof of Theorem \ref{theorem:main} 
requires a special change of dynamical variables, 
which is motivated by 
uniformization of the dynamical curve using elliptic functions.

\subsection{Background of the issue and references}
\label{section:background}

Before we proceed to further details, let’s comment on the background 
of the issue, and also
give some necessary references.

Regarding the side of quantum $R$-matrices and 
the Yang-Baxter equation, the main reference is Baxter's book
\cite{Baxter}. His original papers \cite{Baxter71,Baxter72,Baxter73}
may be also useful. The fundamental role of the Yang-Baxter equation
in the theory of quantum integrable systems was revealed by
Faddeev and Takhtajan in \cite{FT79}, see also books
\cite{Gaudin-book}, \cite{KBI93} and \cite{Slavnov-book}.

Our approach to classical integrable hierarchies of nonlinear
differential and difference equations is based on the works
\cite{DJKM83,JM83} of the Kyoto school. The main object is the
tau-function which serves as a universal dependent variable and
satisfies an infinite set of bilinear equations. 
See also the book \cite{HBbook} and references therein.

The most popular and well-studied integrable hierarchies are
the Ka\-dom\-tsev-Pet\-vi\-ash\-vi\-li (KP) and Toda lattice hierarchies
\cite{DJKM83,UT84} and their multi-component 
generalizations \cite{DJKM81}--\cite{TZ25}. 
The DKP hierarchy is less studied.
It is one of the integrable hierarchies with 
$D_{\infty}$ symmetries introduced by 
Jimbo and Miwa in 1983 \cite{JM83}.
It was subsequently rediscovered several times and came to be
also known as the coupled KP hierarchy \cite{HO} and
the Pfaff lattice \cite{AHM,ASM}, see also
\cite{Kakei,IWS,Willox}. The latter name is motivated by the fact 
that some solutions to the hierarchy are expressed through Pfaffians. 
The solutions and the algebraic structure were studied in 
\cite{Kodama}, the relation to matrix integrals 
was elaborated in \cite{AHM,ASM,Kakei,Vandeleur,Orlov}.
Bearing certain similarities with the KP and Toda lattice hierarchies,
the DKP one is essentially different and not that well studied.
Recently, in our papers \cite{SZ24,SZ25a} multi-component generalizations
of these hierarchies were introduced.

The general approach to hierarchies with zero dispersion
which we follow in this paper was developed by Takasaki and 
Takebe in \cite{TT95}. This approach was extended to the
hierarchies of Pfaff type (DKP and Pfaff-Toda) 
in \cite{Takasaki07,Takasaki09}. 

An important observation made by Takasaki in
\cite{Takasaki07,Takasaki09} is that the dispersionless versions
of integrable hierarchies of the Pfaff type such as DKP and Pfaff-Toda
contain a hidden (in general smooth) {\it elliptic 
curve} naturally built in the structure
of the hierarchy. This observation gave a special additional 
interest to the theory and had significant consequences.

The theory of dispersionless Pfaff-type hierarchies 
was further developed in \cite{AZ14,AZ15}, where it was
shown that a special change of dynamical 
variables motivated by uniformization
of the elliptic curve via elliptic functions, makes the structure
of the hierarchy especially nice and clear 
(at the cost of having to deal with 
non-elementary functions). In
\cite{SZ24,SZ25b} this approach was generalized to the
multi-component hierarchies of Pfaff type. The result was that
in this more general case the same elliptic curve emerges, but 
for multi-component hierarchies it is equipped with more than one
marked points (which are dynamical variables).
In our recent paper \cite{SZ26a}
we have suggested to call it the {\it dynamical curve},
since its parameters depend on the hierarchical times.
Moreover, in \cite{Z24} it was shown that the dynamical curve
can also be detected in hierarchies of the KP type (with the number
of component more than 1), but in this case it is a smooth {\it rational}
curve (of genus 0) rather than elliptic one.

\subsection{The notation and conventions related to quantum\\
$R$-matrices}

We will need some basic concepts and standard designations adopted in the theory of quantum integrable systems and vertex models.
The main object for us is a quantum $R$-matrix satisfying the Yang-Baxter equation. In this paper, we deal with the simplest example of it
known as the Baxter $R$-matrix. In the context of vertex models, 
it is the matrix of Boltzmann's weights associated with a vertex
of the 8-vertex model.

Let $V\cong \CC^2$ be the 2-dimensional linear space over $\CC$.
Let $({\sf e}_1, {\sf e}_2)$ be a basis in $V$. 
Given a linear operator ${\sf R}\in {\rm End}(V\otimes V)$, we
represent it as a $4\times 4$ matrix in the natural basis
$({\sf e}_1 \otimes {\sf e}_1, \, {\sf e}_1 \otimes {\sf e}_2,\,
{\sf e}_2 \otimes {\sf e}_1, \, {\sf e}_2 \otimes {\sf e}_2)$
in $V\otimes V$. In particular, we will be mostly interested
in symmetric matrices of the form (\ref{R10a}).
Such a matrix can be also represented as
\beq\label{R1b}
{\sf R}=\sum_{j=0}^3 w_j \sigma_j \otimes \sigma_j,
\eeq
where $\sigma_j$ are Pauli matrices
$$
\sigma_0 =\left (\begin{array}{rr} 1 &0 \\ 0 & 1 \end{array}\right ),
\quad
\sigma_1 =\left (\begin{array}{rr} 0 &1 \\ 1 & 0 \end{array}\right ),
\quad
\sigma_2 =\left (\begin{array}{rr} 0 &-i \\ i & 0 \end{array}\right ),
\quad
\sigma_3 =\left (\begin{array}{rr} 1 &0 \\ 0 & -1 \end{array}\right ),
$$
and
$$
\begin{array}{l}
w_0=\frac{1}{2}(a+b), \quad w_3=\frac{1}{2}(a-b), 
\quad w_1=\frac{1}{2}(c+d), \quad w_2=\frac{1}{2}(c-d).
\end{array}
$$

Further, 
let $V_1, V_2, V_3$ be three copies of the linear space $V\cong \CC^2$,
each equipped with the basis as before.
By ${\sf R}_{12}\in {\rm End}(V_1\otimes V_2)$ 
we denote the matrix that acts in the 8-dimensional space
$V_1\otimes V_2 \otimes V_3$ in the following way:
it acts non-trivially (as ${\sf R}$) in $V_1\otimes V_2$ 
and trivially in $V_3$. Similarly,
${\sf R}_{13}$ and ${\sf R}_{23}$
act as ${\sf R}$ in $V_1\otimes V_3$ and $V_2\otimes V_3$ respectively
and trivially in the third space. 
Let 
${\sf P}_{ab}\in {\rm End}(V_a \otimes V_b)$ be the permutation 
operator acting as ${\sf P}({\sf e}_i \otimes {\sf e}_j)=
{\sf e}_j \otimes {\sf e}_i$, then ${\sf R}_{ba}=
{\sf P}_{ab}{\sf R}_{ab}{\sf P}_{ab}$. In the matrix form,
\beq\label{P1}
{\sf P}=\left (
\begin{array}{cccc}
1 & 0 & 0& 0 
\\ 
0 & 0 & 1 & 0
\\ 
0 & 1 & 0 & 0
\\ 
0 & 0 & 0& 1
\end{array}
\right ) .
\eeq
Note that for symmetric
matrices of the form (\ref{R10a}) it holds ${\sf R}_{12}={\sf R}_{21}$.

The $R$-matrix is required to satisfy the Yang-Baxter equation:
\beq\label{RRR6b}
{\sf R}_{12}{\sf R}'_{13} {\sf R}''_{23}=
{\sf R}''_{23} {\sf R}'_{13} {\sf R}_{12}.
\eeq
The Baxter $R$-matrix is a solution to this equation under the
assumption that each $R$-matrix has the structure (\ref{R10a}), i.e.,
nonzero elements ($a', b', c', d'$ for ${\sf R}'$ and
$a'', b'', c'', d''$ for ${\sf R}''$) 
are at the same places as in (\ref{R10a}).

Concepts and designations specific to the theory of 
integrable hierarchies will be introduced in the main text 
at the appropriate places.

\subsection{Structure of the paper}

In Section 2 the mDKP hierarchy is introduced. Its definition 
looks most natural in the context 
of the more general Pfaff-Toda hierarchy introduced by Takasaki
in \cite{Takasaki09}. The full set of bilinear equations for
the tau-function is obtained. Section 3 is devoted to the 
dispersionless limit of the mDKP hierarchy. First, in Section 3.1,
we give its ``algebraic formulation'' which consists in finding
dispersionless limits of the bilinear equations obtained in
Section 2. Next, in Section 3.2, the ``Yang-Baxter formulation''
is given, which is outlined above. Its equivalence to the whole
hierarchy is proved in the next Section 4. The proof is based 
on a special change of variables motivated by uniformization
of the dynamical elliptic curve (or, in the context
of $R$-matrices, the spectral curve) using
elliptic functions. 
In Section 5 some perspectives and problems for future research
are discussed. There are also two appendices. Appendix A contains
the necessary information about theta-functions and elliptic
functions. Appendix B provides some details 
of the uniformization of algebraic curves, elliptic and rational.

\section{Modified DKP hierarchy}
\label{section:mDKP}

It is natural to introduce the modified DKP hierarchy
starting from the more general Pfaff-Toda hierarchy \cite{Takasaki09}.
Independent variables of the latter are two infinite sets
of continuous times
$$
{\bf t}=\{t_1, t_2, t_3, \ldots \}, \qquad
\bar {\bf t}=\{\bar t_1, \bar t_2, \bar t_3, \ldots \},
$$
and two discrete variables $n, \bar n \in \ZZ$ such that
$n-\bar n \in 2 \ZZ$ (the bar does not 
mean complex conjugation). In the framework of the bilinear
formalism, the universal dependent variable is the tau-function
$\tau (n, \bar n, {\bf t}, \bar {\bf t})$ satisfying the general
bilinear equation
\beq\label{bil1}
\begin{array}{l}
\displaystyle{ \phantom{+}
\oint_{C_{\infty}}\frac{dz}{z^2}\, z^{n-n'} e^{\xi ({\bf t}-{\bf t'}, z)}
\tau \bigl (n-1, \bar n, {\bf t}-[z^{-1}], \bar {\bf t}\bigr )
\tau \bigl (n'+1, \bar n', {\bf t'}+[z^{-1}],\bar {\bf t}\bigr )}
\\ \\
\displaystyle{
+ \, 
\oint_{C_{\infty}}\frac{dz}{z^2}\, z^{n'-n} e^{-\xi ({\bf t}-{\bf t'}, z)}
\tau \bigl (n+1, \bar n, {\bf t}+[z^{-1}], \bar {\bf t}\bigr )
\tau \bigl (n'-1, \bar n', {\bf t'}-[z^{-1}], \bar {\bf t}\bigr )}
\\ \\
\displaystyle{
= \, 
\oint_{C_{\infty}}\frac{dz}{z^2}\, 
z^{\bar n- \bar n'} e^{\xi (\bar {\bf t}-\bar {\bf t'}, z)}
\tau \bigl (n, \bar n -1, {\bf t}, \bar {\bf t}-[z^{-1}]\bigr )
\tau \bigl (n', \bar n'+1, {\bf t'}, \bar {\bf t'}+[z^{-1}]\bigr )}
\\ \\
\displaystyle{
+ \, 
\oint_{C_{\infty}}\frac{dz}{z^2}\, 
z^{\bar n'- \bar n} e^{-\xi (\bar {\bf t}-\bar {\bf t'}, z)}
\tau \bigl (n, \bar n +1, {\bf t}, \bar {\bf t}+[z^{-1}]\bigr )
\tau \bigl (n', \bar n'-1, {\bf t'}, \bar {\bf t'}-[z^{-1}]\bigr )},
\end{array}
\eeq
which is valid for all ${\bf t}, \bar {\bf t}, {\bf t'}, \bar {\bf t'}$
and $n,\bar n, n' ,\bar n'$ such that 
\beq\label{parity}
n-\bar n\in 2\ZZ +1, \qquad
n'-\bar n'\in 2\ZZ +1. 
\eeq
In (\ref{bil1}) the following standard
notations are used:
\beq\label{st}
\begin{array}{l}
\displaystyle{
\xi ({\bf t}, z)=\sum_{k\geq 1}t_k z^k,}
\\ \\
{\bf t}\pm [z^{-1}] =\{ t_1\pm z^{-1}, t_2\pm \frac{1}{2}\, z^{-2},
t_3\pm \frac{1}{3}\, z^{-3}, \ldots \}.
\end{array}
\eeq
The contour $C_{\infty}$ is a big circle of radius $R\to \infty$.

Equation (\ref{bil1}) is rather general and contains some 
important subhierarchies. For example,
if one puts $\bar n' =\bar n$ and $\bar {\bf t'}=\bar {\bf t}$,
the right-hand side of (\ref{bil1}) vanishes and the rest 
gives the bilinear equation for the DKP hierarchy: 
\beq\label{bil2}
\begin{array}{l}
\displaystyle{ \phantom{+}
\oint_{C_{\infty}}\frac{dz}{z^2}\, z^{n-n'} e^{\xi ({\bf t}-{\bf t'}, z)}
\tau \bigl (n-1, {\bf t}-[z^{-1}]\bigr )
\tau \bigl (n'+1, {\bf t'}+[z^{-1}]\bigr )}
\\ \\
\displaystyle{
+ \, 
\oint_{C_{\infty}}\frac{dz}{z^2}\, z^{n'-n} e^{-\xi ({\bf t}-{\bf t'}, z)}
\tau \bigl (n+1, {\bf t}+[z^{-1}], \bigr )
\tau \bigl (n'-1, {\bf t'}-[z^{-1}]\bigr )=0.}
\end{array}
\eeq
The variables $\bar n, \bar {\bf t}$ in the tau-function
$\tau (n, {\bf t})=\tau \bigl (n, \bar n, {\bf t}, 
\bar {\bf t}\bigr )$ do not participate in this
equation and enter as parameters.
It is assumed in (\ref{bil2}) that $n-n'\in 2 \ZZ$.

Another example (which is dealt with in this paper) is obtained 
if one puts $\bar {\bf t'}=\bar {\bf t}$ and $\bar n'=\bar n +1$.
In this case equation (\ref{bil1}) yields:
\beq\label{bil3}
\begin{array}{l}
\displaystyle{ \phantom{+}
\frac{1}{2\pi i}
\oint_{C_{\infty}}\frac{dz}{z^2}\, z^{n-n'} e^{\xi ({\bf t}-{\bf t'}, z)}
\tau \bigl (n-1, \bar n, {\bf t}-[z^{-1}]\bigr )
\tau \bigl (n'+1, \bar n +1, {\bf t'}+[z^{-1}]\bigr )}
\\ \\
\displaystyle{
+ \, 
\frac{1}{2\pi i}
\oint_{C_{\infty}}\frac{dz}{z^2}\, z^{n'-n} e^{-\xi ({\bf t}-{\bf t'}, z)}
\tau \bigl (n+1, \bar n, {\bf t}+[z^{-1}], \bigr )
\tau \bigl (n'-1, \bar n +1,{\bf t'}-[z^{-1}]\bigr )}
\\ \\
\displaystyle{
=\, \tau \bigl (n, \bar n +1,{\bf t}\bigr )
\tau \bigl (n', \bar n,{\bf t'}\bigr ).}
\end{array}
\eeq
Here the tau-function 
$\tau (n, \bar n, {\bf t})=\tau \bigl (n, \bar n, {\bf t}, 
\bar {\bf t}\bigr )$ is a function of the continuous times
${\bf t}$ and of the two discrete variables $n, \bar n$ while
the bar-times enter as parameters. Following the analogy with 
the interrelation between the KP and modified KP hierarchies it
is natural to call the hierarchy defined by equation (\ref{bil3})
{\it modified DKP hierarchy} (mDKP). Specifying the parity conditions
(\ref{parity}) to this case, we have $n'-\bar n \in 2\ZZ$ and
$n-\bar n \in 2\ZZ +1$, so $n'-n \in 2\ZZ +1$ in (\ref{bil3}).

The ``Miwa substitution'' \cite{Miwa82}
\beq\label{mkp2a}
\left \{\begin{array}{l}
\displaystyle{
n - n' = P^{+} - P^{-},}
\\ \\
\displaystyle{
{\bf t} - {\bf t}' =
\sum_{i = 1}^{P^{+}}[a_i^{-1}]
-\sum_{k = 1}^{P^{-}}[b_k^{-1}],}
\end{array}\right.
\eeq
where the points $a_i, b_k \in \CC$ are assumed to be distinct,
makes it possible to evaluate the integrals in (\ref{bil3}). 
Indeed, we have:
$$
z^{P^+ -P^-} e^{\xi ({\bf t}-{\bf t'}, z)} =
\prod_{i=1}^{P^+} \left (z^{-1}-a_i^{-1}\right )^{-1}
\prod_{j=1}^{P^-} \left (z^{-1}-b_j^{-1}\right ),
$$
and the integrals are given by sum of residues at the simple
poles at $a_i, b_k$. The special tuning of the shifts of discrete and continuous variables in (\ref{mkp2a}) guaranties that 
possible residues at $\infty$ vanish. 
The result is (for details of the derivation see \cite{SZ26a}):
\beq\label{bil4}
\begin{array}{l}
\displaystyle{
\sum_{s = 1}^{P^{+}} \prod_{i = 1, \neq s}^{P^{+}} E^{-1}(a_{s}, a_{i}) \,
\prod_{k = 1}^{P^{-}} E(a_{s}, b_{k})\,
\tau \Bigl( 
n + P^{+} \! -\! 1, \bar n, {\bf t} + \sum_{i\neq s}^{P^{+}}
[a_{i}^{-1}] \Bigr)}
\\ \\ \phantom{aaaaaaaaaaaaaaaaaaaaaaaaa}
\displaystyle{\times \, \,
\tau \Bigl(
n + P^{-} \! +\! 1, \bar n +1, {\bf t} + [a_{s}^{-1}]+ \sum_{k=1}^{P^{-}}
[b_{k}^{-1}] \Bigr) }
\\ \\ + \,
\displaystyle{
\sum_{s = 1}^{P^{-}} \prod_{i = 1, \neq s}^{P^{-}} E^{-1}(b_{s}, b_{i}) \,
\prod_{k = 1}^{P^{+}} E(b_{s}, a_{k})\,
\tau \Bigl( 
n + P^{-} \! -\! 1, \bar n, {\bf t} + \sum_{i\neq s}^{P^{-}}
[b_{i}^{-1}] \Bigr)}
\\ \\ \phantom{aaaaaaaaaaaaaaaaaaaaaaaaa}
\displaystyle{\times \, 
\tau \Bigl(
n + P^{+} \! +\! 1, \bar n +1, {\bf t} + [b_{s}^{-1}] + 
\sum_{k = 1}^{P^{+}}
[a_{k}^{-1}] \Bigr)}
\\ \\
\displaystyle{
=\, \tau \Bigl ( n+P^+ , \bar n +1, {\bf t} + \sum_{i=1}^{P^{+}}
[a_i^{-1}]\Bigr )
\tau \Bigl ( n+P^- , \bar n, {\bf t} + \sum_{k=1}^{P^{-}}
[b_k^{-1}]\Bigr )
}
\end{array}
\eeq
where 
\beq\label{E}
E(z, \zeta )=z^{-1} - \zeta^{-1}.
\eeq
Note that $P^{+} - P^{-}\in 2\ZZ +1$ here.
Using the method developed in \cite{Shigyo13} for the KP and mKP
hierarchies, one can prove
the following statement:

\begin{proposition}
Equations (\ref{bil3}) and (\ref{bil4}) (assuming that the latter holds
for all $a_i, b_k$ and $P^{+} - P^{-}\in 2\ZZ +1$) are equivalent.
\end{proposition}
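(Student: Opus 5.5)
One direction is routine, and the other follows the method of \cite{Shigyo13}. I will show that (\ref{bil3}) $\Rightarrow$ (\ref{bil4}) by direct residue evaluation, and that (\ref{bil4}) $\Rightarrow$ (\ref{bil3}) because Miwa-type shifts are dense enough to recover the full generating relation.

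\emph{Step 1: (\ref{bil3}) $\Rightarrow$ (\ref{bil4}).} Substitute the Miwa parametrization (\ref{mkp2a}) into (\ref{bil3}) and shift ${\bf t}'\to{\bf t}$, so that the arguments become ${\bf t}+\sum[a_i^{-1}]$, etc. In the first integrand the factor $z^{n-n'}e^{\xi({\bf t}-{\bf t}',z)}$ equals $\prod_i(z^{-1}-a_i^{-1})^{-1}\prod_k(z^{-1}-b_k^{-1})$. The shifts $\pm[z^{-1}]$ in the tau-functions combine with $[a_i^{-1}]$ via $e^{\xi([a^{-1}],z)}=(1-z/a)^{-1}$. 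The second integrand is the mirror image, with the roles of $a_i$ and $b_k$ exchanged.

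Each integrand is, as a function of $z$, regular outside the poles at $z=a_s$ (first integral) or $z=b_s$ (second integral), apart from the point $z=\infty$. Deform $C_\infty$ to small circles around these poles. Each residue produces a term of the first or second sum in (\ref{bil4}), with the prefactors $\prod_{i\ne s}E^{-1}(a_s,a_i)\prod_k E(a_s,b_k)$. The key bookkeeping is that at $z=a_s$ the shift ${\bf t}-[z^{-1}]$ cancels $[a_s^{-1}]$ in one tau-function and adds it in the other.

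The contribution at $z=\infty$ (equivalently, at $z^{-1}=0$, given the measure $dz/z^2$) must be checked. The parity condition $P^+-P^-\in2\ZZ+1$, together with the $\pm1$ shifts of $n$, makes the integrands decay so that no residue at infinity survives, except for a constant term. That constant term is precisely $\tau(n+P^+,\bar n+1,\dots)\tau(n+P^-,\bar n,\dots)$ on the right-hand side. I expect this to come out of the leading order $z^{-2}\cdot z^{\pm1}$ behaviour when $|P^+-P^-|=1$ and to vanish otherwise, so some care with normalization is needed. \cite{SZ26a} is cited for exactly this computation, so I would follow it.

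\emph{Step 2: (\ref{bil4}) $\Rightarrow$ (\ref{bil3}).} Following \cite{Shigyo13}, both sides of (\ref{bil3}) are analytic in ${\bf t}-{\bf t}'$ and are determined by their Taylor coefficients. Write ${\bf t}-{\bf t}'={\bf y}$ and consider the difference $\Phi(n,n',{\bf t},{\bf t}')$ of the two sides. By Step 1, run in reverse, (\ref{bil4}) says exactly that $\Phi$ vanishes on all Miwa points ${\bf y}=\sum[a_i^{-1}]-\sum[b_k^{-1}]$ with $n-n'=P^+-P^-$ odd.

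Take $P^-=0$ or $1$ and $P^+$ large, and fix $n-n'$ by adding pairs $a,b$ which can be sent to coincide. This shows that $\Phi(\cdot,{\bf y})$ vanishes on the set $\{\sum_{i=1}^{N}[a_i^{-1}]\}$ for arbitrarily large $N$. For each $N$ the map $(a_1^{-1},\dots,a_N^{-1})\mapsto(y_1,\dots,y_N)$ is a local diffeomorphism (Newton's power sums). Hence the Taylor coefficients of $\Phi$ in $y_1,\dots,y_N$ at ${\bf y}=0$ vanish to all orders once the higher $y_k$ are also controlled. Letting $N\to\infty$ with the $a_i^{-1}$ small gives $\Phi\equiv0$ as a formal series, and therefore (\ref{bil3}) for all $n-n'$ odd.

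\emph{Main obstacle.} The delicate point is Step 2's reduction to the power-sum parametrization. Shifts by $n-n'$ have to be handled simultaneously with arbitrary ${\bf y}$, so one needs the Miwa points with the right parity of $P^+-P^-$ to be Zariski dense for each fixed $n-n'$. I would handle this as Shigyo does for mKP: use pairs $a_j,b_j$ with $b_j\to a_j$ (or $b_j\to\infty$) to adjust $P^\pm$ freely without changing ${\bf y}$ at leading order. Then argue inductively in the degree of the Taylor expansion.
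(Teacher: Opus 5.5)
Your plan, residue evaluation at Miwa points for (\ref{bil3})$\Rightarrow$(\ref{bil4}) and a Shigyo-type argument that Miwa points determine the Taylor series in ${\bf y}={\bf t}-{\bf t}'$ for the converse, is the route the paper points to. The paper gives no proof beyond citing \cite{Shigyo13}, and \cite{SZ26a} for the residue computation. However, Step 1 contains a real error. Since $n-n'=P^+-P^-$, the factor $z^{\pm(n-n')}e^{\pm\xi({\bf t}-{\bf t}',z)}=\prod_i(z^{-1}-a_i^{-1})^{\mp1}\prod_k(z^{-1}-b_k^{-1})^{\pm1}$ tends to a finite constant as $z\to\infty$. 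With $dz/z^2$, both integrands are therefore $O(z^{-2})$, and the residue at infinity vanishes for every $(P^+,P^-)$. There is no surviving ``constant term'', and parity plays no role here. The right-hand side of (\ref{bil4}) is simply the right-hand side $\tau(n,\bar n+1,{\bf t})\tau(n',\bar n,{\bf t}')$ of (\ref{bil3}) evaluated at the shifted arguments, and it is present for all $P^\pm$. Your sketch never carries it over and instead tries to generate it at infinity. Your prediction that it appears only when $|P^+-P^-|=1$ and vanishes otherwise is false: it would give (\ref{bil5}) with zero right-hand side. Parity matters only because (\ref{bil3}) is asserted only for $n-n'$ odd. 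Also, the residue at $z=b_s$ gives $\tau(n+P^++1,\bar n,\cdot)\,\tau(n+P^--1,\bar n+1,\cdot)$, as in (\ref{bil6}); the $\bar n$-labels in the second sum of (\ref{bil4}) as printed are interchanged.

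In Step 2 the bookkeeping at fixed $n-n'$ is garbled. With $P^-\in\{0,1\}$ and $P^+$ large, $n-n'$ is large rather than fixed. A pair with $b_j=a_j$ changes neither ${\bf y}$ nor $n-n'$, so it contributes nothing. For fixed odd $m=n-n'>0$ (the case $m<0$ is symmetric), take $P^+=m+N$, $P^-=N$ with generic $a$'s and $b$'s. Then $y_k=\frac{1}{k}\bigl(p_k(x)-p_k(\tilde x)\bigr)$, with $x_i=a_i^{-1}$, $\tilde x_k=b_k^{-1}$ and $p_k$ the power sums. Split $\Phi=\sum_d\Phi_d$ into weighted-homogeneous parts ($\deg y_k=k$), so each $\Phi_d$ is a polynomial in $y_1,\dots,y_d$ only. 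The scaling $(x,\tilde x)\to(\epsilon x,\epsilon\tilde x)$ separates the $\Phi_d$, hence $\Phi_d\bigl({\bf y}(x,\tilde x)\bigr)=0$ as a polynomial identity in $(x,\tilde x)$. Setting $\tilde x=0$ in that identity and using algebraic independence of $p_1,\dots,p_d$ in $m+N\ge d$ variables gives $\Phi_d=0$. This grading is what your phrase ``once the higher $y_k$ are also controlled'' has to mean. On an $N$-point Miwa locus the $y_k$ with $k>N$ are functions of the others, so the local-diffeomorphism remark alone does not kill any Taylor coefficient. The grading argument also avoids sending $b_k\to\infty$ inside (\ref{bil4}) itself, which is delicate because the factors $E^{-1}(b_s,b_i)$ blow up when several $b$'s go to infinity together.
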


The simplest non-trivial particular cases of (\ref{bil4}) are:

\noindent
\underline{$(P^{+}, P^{-})=(3, 0)$}:
\beq\label{bil5}
\begin{array}{l}
\phantom{-}E^{-1}(a_1, a_2)E^{-1}(a_1, a_3)
\tau \Bigl( n + 2, \bar n, {\bf t} + [a_2^{-1}] + [a_3^{-1}]\Bigr)
\tau \Bigl( n+1, \bar n+1,{\bf t} + [a_1^{-1}]\Bigr)
\\ \\
+\, E^{-1}(a_2, a_1)E^{-1}(a_2, a_3)
\tau \Bigl( n + 2, \bar n, {\bf t} + [a_1^{-1}] + [a_3^{-1}]\Bigr)
\tau \Bigl( n+1, \bar n+1 ,{\bf t} + [a_2^{-1}]\Bigr)
\\ \\
+\,E^{-1}(a_3, a_1)E^{-1}(a_3, a_2)
\tau \Bigl( n + 2, \bar n, {\bf t} + [a_1^{-1}] + [a_2^{-1}]\Bigr)
\tau \Bigl( n+1, \bar n+1, {\bf t} + [a_3^{-1}]\Bigr)
\\ \\
=\,
\tau \Bigl( n + 3, \bar n+1, {\bf t} + [a_1^{-1}] + [a_2^{-1}] 
+ [a_3^{-1}]\Bigr)
\, \tau \Bigl( n, \bar n, {\bf t}\Bigr),
\end{array}
\eeq

\noindent
\underline{$(P^{+}, P^{-})=(2, 1)$}:
\beq\label{bil6}
\begin{array}{l}
\phantom{-}E^{-1}(a_1, a_2)E(a_1, b_1)
\tau \Bigl( n+1, \bar n, {\bf t} + [a_2^{-1}]\Bigr)
\tau \Bigl( n+2, \bar n+1, {\bf t} + [a_1^{-1}] + [b_1^{-1}]\Bigr)
\\ \\
+\, E^{-1}(a_2, a_1)E(a_2, b_1)
\tau \Bigl( n+1, \bar n, {\bf t} + [a_1^{-1}] \Bigr)
\tau \Bigl( n+2, \bar n+1, {\bf t} + [a_2^{-1}] + [b_1^{-1}]\Bigr)
\\ \\
+\,E(b_1, a_1)E(b_1, a_2)
\tau \Bigl( n, \bar n+1, {\bf t}\Bigr)
\tau \Bigl( n+3, \bar n, {\bf t} + [a_1^{-1}] + [a_2^{-1}]
+ [b_1^{-1}]\Bigr)
\\ \\
=\,  
\tau \Bigl( n + 2, \bar n+1, {\bf t} + [a_1^{-1}] + [a_2^{-1}]\Bigr)
\tau \Bigl( n+1, \bar n, {\bf t}+ [b_1^{-1}]\Bigr).
\end{array}
\eeq 
Following the analogy with the
KP and mKP hierarchies, we conjecture that these 
two equations are already equivalent
to the whole hierarchy given by (\ref{bil4}) or (\ref{bil3}).
Below this will be proved in the dispersionless limit.

\section{Dispersionless version of the modified DKP hierarchy}

According to \cite{TT95},
the passage to the dispersionless version of the 
hierarchy consists in introducing a dispersion parameter $\hbar$ and
the re-scaling
$t_{k}\to t_{k}/\hbar$ for all $k \geq 1$,
$n\to t_0/\hbar$,
$\bar n\to \bar t_0/\hbar$,
with the subsequent limit $\hbar \to 0$. More precisely,
instead of the tau-function we
introduce the function 
$F(t_0, \bar t_0, {\bf t}; \hbar )$ related to the tau-function 
by the formula
\beq\label{kp6}
\tau \Bigl (\hbar^{-1}t_0, \hbar^{-1}
\bar t_0,\hbar^{-1}{\bf t}  \Bigr )=
\exp \left ( \frac{1}{\hbar^2}\,
F(t_0, \bar t_0, {\bf t}; \hbar )\right )
\eeq
and consider the limit $\hbar \to 0$ (if it exists):
$$
F=F(t_0, \bar t_0, {\bf t})=
\lim\limits_{\hbar \to 0} 
F(t_0, \bar t_0, {\bf t}; \hbar ).
$$
The $F$-function satisfies 
an infinite number of highly nonlinear differential equations which are
obtained from the bilinear equations for the 
tau-function. 

\subsection{Algebraic formulation}
\label{section:algebraic1}

In the dispersionless limit, the variables $t_0, \bar t_0$ are
continuous, and difference equations in $n , \bar n$ turn into
differential equations in $t_0, \bar t_0$. To write them explicitly,
we introduce the differential operators
\beq\label{mmkp11}
\nabla (z)=\p_{0}+\sum_{k\geq 1}\frac{z^{-k}}{k}\, \p_{t_k},
\qquad
\p_{0}=\p_{ t_{0}}, \qquad
\bar \p_{0}=\p_{\bar t_{0}}.
\eeq
Then
\beq\label{mkp8}
\tau \Bigl (t_0/\hbar \pm 1, \bar t_0/\hbar \pm 1,
{\bf t}/\hbar \pm [z^{-1}]\Bigr )=
\exp \left (\frac{1}{\hbar^2}\, e^{\pm \hbar  \nabla (z)
\pm \hbar \bar \p_0 }F 
(t_0, \bar t_0, {\bf t}; \hbar )\right ),
\eeq
and the limit of (\ref{bil4}) is straightforward. 
Referring to \cite{SZ26a} for details, we present the result:
\beq\label{bil7}
\begin{array}{l}
\displaystyle{
\phantom{+}\sum_{s = 1}^{P^{+}}
\left(
\prod_{i = 1, \neq s}^{P^{+}} E^{-1}(a_{s}, a_{i})
e^{ -\nabla (a_{i}) \nabla (a_{s}) F}
\right )
\left(
\prod_{k=1}^{P^{-}} E(a_{s}, b_{k}) 
e^{ \nabla (a_{s})  \nabla(b_{k})F}
\right)e^{\nabla (a_s)\bar \p_0 F}}
\\ \\ \!  +\,\,
\displaystyle{
\sum_{s = 1}^{P^{-}}
\left(
\prod_{i = 1, \neq s}^{P^{-}} E^{-1}(b_{s}, b_{i})
e^{ -\nabla (b_{i}) \nabla (b_{s}) F}
\right )`
\left(
\prod_{k = 1}^{P^{+}} E(b_{s}, a_{k}) 
e^{ \nabla (b_{s})  \nabla(a_{k})F}
\right)e^{-\nabla (b_s)\bar \p_0 F}}
\\ \\
=\,
\displaystyle{
\left ( \prod_{i=1}^{P^{+}}
e^{\nabla (a_i)\bar \p_0 F}\right )
\left (\prod_{k=1}^{P^{-}}
e^{-\nabla (b_k)\bar \p_0 F}\right )
}
\end{array}
\eeq
(recall that $P^+ + P^- \in  2\ZZ +1$ here). In particular,
the limits of equations (\ref{bil5}), (\ref{bil6}) are as follows:

\noindent
\underline{$(P^{+}, P^{-})=(3, 0)$}:
\beq\label{bil5d}
\begin{array}{l}
\phantom{-}E(a_1, a_2)
e^{(\nabla (a_1)\nabla (a_2)+\nabla (a_3)\bar \p_0) F}
\\ \\
\phantom{aaaaaaaaaaa}
+ E(a_2, a_3)
e^{(\nabla (a_2)\nabla (a_3)+\nabla (a_1)\bar \p_0) F}
\\ \\
\phantom{aaaaaaaaaaaaaaaaaaaaaa}
+ E(a_3, a_1)
e^{(\nabla (a_3)\nabla (a_1)+\nabla (a_2)\bar \p_0) F}
\\ \\
=E(a_1, a_2)E(a_1, a_3)E(a_2, a_3)
e^{\nabla (a_1)\nabla (a_2)F +\nabla (a_2)\nabla (a_3)F+
\nabla (a_3)\nabla (a_1)F+ (\nabla (a_1)+ \nabla (a_2)+
\nabla (a_3)) \bar \p_0 F},
\end{array}
\eeq

\noindent
\underline{$(P^{+}, P^{-})=(2, 1)$}:
\beq\label{bil6d}
\begin{array}{l}
\phantom{-}E(a_1, b_1)
e^{(\nabla (a_1)\nabla (b_1)+\nabla (a_1)\bar \p_0 +
\nabla (b_1)\bar \p_0 )F} 
\\ \\
\phantom{aaaaaaa}+\, E(b_1, a_2)
e^{(\nabla (a_2)\nabla (b_1)+\nabla (a_2)\bar \p_0 +
\nabla (b_1)\bar \p_0 )F}
\\ \\
\phantom{aaaaaaaaaaaaaa}+\,E(a_2, a_1)
e^{(\nabla (a_1)\nabla (a_2)+\nabla (a_1)\bar \p_0 +
\nabla (a_2)\bar \p_0 )F}
\\ \\
=\, E(a_1, a_2)E(a_1, b_1)E(b_1, a_2) 
e^{(\nabla (a_1)\nabla (a_2) +\nabla (a_1)\nabla (b_1)+
\nabla (a_2)\nabla (b_1))F}.
\end{array}
\eeq 

To proceed, it is convenient to introduce the functions
\beq\label{gvw}
\begin{array}{l}
g(z, \zeta )=E(z, \zeta )e^{\nabla (z)\nabla (\zeta )F},
\\ \\
w(z)=z^{-1}e^{\nabla (z)\p_0F}=g(z, \infty ),
\\ \\
v(z)=e^{\nabla (z)\bar \p_0F}
\end{array}
\eeq
(recall that $E(z, \zeta )=z^{-1}-\zeta^{-1}=-E(\zeta , z)$).
In this notation, equations (\ref{bil7}), (\ref{bil5d}), (\ref{bil6d})
acquire the form
\beq\label{bil7a}
\begin{array}{l}
\displaystyle{
\phantom{+}\sum_{s = 1}^{P^{+}}
\left(
\prod_{i = 1, \neq s}^{P^{+}} g^{-1}(a_s, a_i)
\right )
\left(
\prod_{k=1}^{P^{-}} g(a_s, b_k)
\right) v(a_s)}
\\ \\ 
\phantom{aaaaaaa}+\,
\displaystyle{
\sum_{s = 1}^{P^{-}}
\left(
\prod_{i = 1, \neq s}^{P^{-}} 
g^{-1}(b_s, b_i)
\right )
\left(
\prod_{k = 1}^{P^{+}} g(b_s, a_k)
\right)v^{-1}(b_s)}
\\ \\
\phantom{aaaaaaaaaaaaaa}=\,
\displaystyle{
\prod_{i=1}^{P^{+}}
v(a_i)
\prod_{k=1}^{P^{-}}
v^{-1}(b_k),
}
\end{array}
\eeq

\beq\label{bil5a}
\begin{array}{l}
g(a_1, a_2)v(a_3)+g(a_2, a_3)v(a_1)+g(a_3, a_1)v(a_2)
\\ \\
\phantom{aaaaaaaaaaaaa}=
g(a_1, a_2)g(a_1, a_3)g(a_2, a_3)v(a_1)v(a_2)v(a_3),
\end{array}
\eeq

\beq\label{bil6a}
\begin{array}{l}
g(a_1, a_2)v(a_1)v(a_2)+g(a_2, a_3)v(a_2)v(a_3)+
g(a_3, a_1)v(a_1)v(a_3)
\\ \\
\phantom{aaaaaaaaaaaaaaaaaaaaaaa}=g(a_1, a_2)g(a_1, a_3)g(a_2, a_3).
\end{array}
\eeq

\noindent
Letting $a_3\to \infty$ in (\ref{bil5a}), (\ref{bil6a}) and denoting
$a_1=z$, $a_2=\zeta$, we obtain the following relations:
\beq\label{bil6c}
\left \{ \begin{array}{l}
R g(z, \zeta )+v(z)w(\zeta )=w(z)v(\zeta )\Bigl (
1+R v(z)w(\zeta )g(z, \zeta )\Bigr ),
\\ \\
v(\zeta ) \Bigl (R w(\zeta )+ v(z)g(z, \zeta )\Bigr )=
w(z) \Bigl (Rv(z) +w(\zeta )g(z, \zeta )\Bigr ),
\end{array} \right.
\eeq
where
\beq\label{R}
R=v (\infty )=e^{\p_0 \bar \p_0 F}.
\eeq

\begin{proposition} 
The functions $g(z, \zeta )$, $v(z)$ satisfy the equation
\beq\label{bil8}
v^2(\zeta )\Bigl (g^2(z, \zeta )v^2 (z) +1\Bigr )-
\Bigl (g^2(z, \zeta )+v^2 (z)\Bigr ) +V(\zeta )g(z, \zeta )
v(z)=0,
\eeq
where
\beq\label{V(z)}
V(\zeta )=w^{-1}(\zeta )\Bigl (R^{-1} v^2 (\zeta ) -R\Bigr )
+w(\zeta )\Bigl (R v^2 (\zeta ) -R^{-1}\Bigr )
\eeq
and $R$ is given in (\ref{R}).
\end{proposition}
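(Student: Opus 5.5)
The plan is to eliminate $w(z)$ from the pair of relations (\ref{bil6c}), keeping $\zeta$ fixed. Both relations are linear in the single quantity $w(z)$. The remaining quantities $v(z)$, $v(\zeta)$, $w(\zeta)$, $g=g(z,\zeta)$ and $R$ are treated as independent symbols.

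To lighten notation, write $v=v(z)$, $v'=v(\zeta)$, $w'=w(\zeta)$ and $g=g(z,\zeta)$. The first relation in (\ref{bil6c}) gives
$$
w(z)=\frac{Rg+vw'}{v'\,(1+Rvw'g)}.
$$
The second gives
$$
w(z)=\frac{v'\,(Rw'+vg)}{Rv+w'g}.
$$
Equating the two expressions and clearing denominators yields the polynomial identity
$$
(Rg+vw')(Rv+w'g)=v'^2\,(1+Rvw'g)(Rw'+vg).
$$
Equivalently, one can cross-multiply the two relations of (\ref{bil6c}) directly to reach the same identity, without dividing first.

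Next I expand both sides and sort the terms into two groups: those proportional to $Rw'$ and those proportional to $gv$.

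The left-hand side equals
$$
Rw'(g^2+v^2)+gv(R^2+w'^2).
$$
The right-hand side equals
$$
Rw'v'^2(1+g^2v^2)+gv\,v'^2(1+R^2w'^2).
$$
Hence
$$
Rw'\Bigl[v'^2(1+g^2v^2)-(g^2+v^2)\Bigr]+gv\Bigl[v'^2(1+R^2w'^2)-R^2-w'^2\Bigr]=0.
$$
Dividing by $Rw'$ gives exactly (\ref{bil8}), with
$$
V(\zeta)=\frac{v'^2(1+R^2w'^2)-R^2-w'^2}{Rw'}
=w'^{-1}\bigl(R^{-1}v'^2-R\bigr)+w'\bigl(Rv'^2-R^{-1}\bigr),
$$
which is (\ref{V(z)}). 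The result shows that $V$ depends on $\zeta$ only and not on $z$, as claimed in the statement.

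The algebra itself is routine. The one point that needs care is the legitimacy of the divisions. I would handle it by working with the cross-multiplied identity throughout, so that the only division needed is by $Rw(\zeta)$. This quantity is nonzero because $R=e^{\p_0\bar\p_0F}$ and $w(\zeta)=\zeta^{-1}e^{\nabla(\zeta)\p_0F}$ are exponentials, the latter times $\zeta^{-1}$ with $\zeta$ finite.

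The main obstacle is the bookkeeping: one must check that the monomials of the form $gv\cdot(\dots)$ and $Rw'\cdot(\dots)$ exhaust the expansion with no leftover terms. Since each side is a product of two binomials, this is immediate once the four products on each side are written out.
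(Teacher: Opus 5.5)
Your proof is correct. It reaches the result by a somewhat different route from the paper's, though the two are closely related.

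The paper works with the full three-point relations (\ref{bil5a}), (\ref{bil6a}) and eliminates $g(a_1,a_2)$ by dividing one relation by the other. This gives a separation-of-variables identity: a certain combination of $g(a_2,a_3)$, $v(a_2)$, $v(a_3)$ equals the same combination with $a_2$ replaced by $a_1$, so it does not depend on $z$. Only after that does the paper send $z\to\infty$ to identify $V(\zeta)$.

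You instead send one point to $\infty$ before eliminating, by starting from (\ref{bil6c}). You then eliminate $w(z)=g(z,\infty)$, which up to sign is what the paper's eliminated quantity $g(a_1,a_2)$ becomes when $a_1\to\infty$. So your argument is the paper's with the limit and the elimination done in the opposite order.

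Your ordering buys three things:
\begin{itemize}
\item The elimination is linear, and cross-multiplying avoids every division except by $Rw(\zeta)\neq 0$. The paper's division of the two relations implicitly needs factors such as $g(a_1,a_3)g(a_2,a_3)v(a_1)v(a_2)-1$ to be nonzero.
\item The $z$-independence of the coefficient of $g(z,\zeta)v(z)$ is manifest, so no separation-of-variables step is needed.
\item No limit is needed to find $V(\zeta)$.
\end{itemize}

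The paper's ordering gives a more structural statement: the combination takes the same value for any two points relative to a fixed third one. This explains conceptually why a $z$-independent $V(\zeta)$ must exist. The content of the proposition is the same either way.
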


\noindent
{\it Proof.} Rewriting equations (\ref{bil5a}), (\ref{bil6a})
in the form
$$
\begin{array}{l}
g(a_1, a_2)v(a_3) \Bigl ( g(a_1, a_3)g(a_2, a_3)v(a_1)v(a_2)-1 \Bigr )
\\ \\
\phantom{aaaaaaaaaaaaaaaaaa}
=\, g(a_2, a_3) v(a_1)+g(a_3, a_1) v(a_2),
\\ \\
g(a_1, a_2) \Bigl (g(a_1, a_3)g(a_2, a_3)-v(a_1)v(a_2)\Bigr )
\\ \\
\phantom{aaaaaaaaaaaaaaaaaa}
=\, v(a_3)\Bigl (g(a_2, a_3) v(a_2)+g(a_3, a_1) v(a_1)\Bigr ),
\end{array}
$$
we can exclude $g(a_1, a_2)$ by dividing them. After some
transformations, this results in the relation
$$
\begin{array}{l}
\displaystyle{
\phantom{=}v^2(a_3) 
\left (g(a_2, a_3)v(a_2) +\frac{1}{g(a_2, a_3)v(a_2)}\right )
-\frac{v(a_2)}{g(a_2, a_3)}-\frac{g(a_2, a_3)}{v(a_2)}}
\\ \\
\displaystyle{
=\,
v^2(a_3) \left (g(a_1, a_3)v(a_1) +\frac{1}{g(a_1, a_3)v(a_1)}\right )
-\frac{v(a_1)}{g(a_1, a_3)}-\frac{g(a_1, a_3)}{v(a_1)}}
\end{array}
$$
which means separation of variables: the left-hand side 
is a function of $a_2$ while the right-hand side is the same
function of $a_1$. Therefore, we conclude that 
the combination
$$
v^2(\zeta ) 
\left (g(z, \zeta )v(z) +\frac{1}{g(z, \zeta )v(z)}\right )
-\frac{v(z)}{g(z, \zeta )}-\frac{g(z, \zeta )}{v(z)}:=-V(\zeta )
$$
does not depend on $z$. To find $V(\zeta )$, we let $z$ tend to
$\infty$. This gives (\ref{bil8}) and (\ref{V(z)}).
\square

\begin{corollary}
The functions $w(z)$, $v(z)$ satisfy the equation
\beq\label{bil8a}
R^2 \Bigl (w^2(z)v^2 (z) +1\Bigr )-
\Bigl (w^2(z)+v^2 (z)\Bigr ) +Vw(z)v(\zeta )=0,
\eeq
where $R=v(\infty )$ and
\beq\label{V}
V=V(\infty )=2\, e^{\p_0 (\bar \p_0 -\p_0)F}\p_{t_1}\bar \p_0 F
\eeq
do not depend on $z$.
\end{corollary}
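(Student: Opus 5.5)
The plan is to specialize Proposition 3.1 to $\zeta=\infty$. Since $\nabla(\zeta)$ is a formal series in $\zeta^{-1}$ with leading term $\p_0$, all the quantities in (\ref{bil8}) and (\ref{V(z)}) are expansions in $\zeta^{-1}$ that are regular at $\zeta=\infty$, except for $w(\zeta)$, which carries the explicit prefactor $\zeta^{-1}$. So I will let $\zeta\to\infty$ in (\ref{bil8}) with $z$ fixed, using
\[
g(z,\infty)=z^{-1}e^{\nabla(z)\p_0 F}=w(z), \qquad v(\infty)=e^{\p_0\bar\p_0F}=R .
\]
Substituting these values gives
\[
R^2\bigl(w^2(z)v^2(z)+1\bigr)-\bigl(w^2(z)+v^2(z)\bigr)+V(\infty)\,w(z)v(z)=0 .
\]
This is (\ref{bil8a}) with $V=V(\infty)$; the factor written $v(\zeta)$ in the last term of (\ref{bil8a}) should read $v(z)$. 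The Proposition already shows that $V(\zeta)$ does not depend on $z$, and $R$ manifestly does not either, so $V(\infty)$ is independent of $z$.

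The only real work, and the one delicate point, is evaluating $V(\infty)$ from (\ref{V(z)}). There $w^{-1}(\zeta)$ blows up like $\zeta$, so the first term is a $0\cdot\infty$ limit that must be expanded to first order in $\zeta^{-1}$. I will write
\[
\nabla(\zeta)=\p_0+\zeta^{-1}\p_{t_1}+O(\zeta^{-2}).
\]
This gives
\[
w(\zeta)=\zeta^{-1}e^{\p_0^2F}\bigl(1+O(\zeta^{-1})\bigr), \qquad v^2(\zeta)=R^2\bigl(1+2\zeta^{-1}\p_{t_1}\bar\p_0F+O(\zeta^{-2})\bigr).
\]
It follows that
\[
R^{-1}v^2(\zeta)-R=2R\,\zeta^{-1}\p_{t_1}\bar\p_0F+O(\zeta^{-2}),
\]
and therefore
\[
w^{-1}(\zeta)\bigl(R^{-1}v^2(\zeta)-R\bigr)\to 2e^{-\p_0^2F}e^{\p_0\bar\p_0F}\,\p_{t_1}\bar\p_0F .
\]
The second term $w(\zeta)\bigl(Rv^2(\zeta)-R^{-1}\bigr)$ tends to $0$, because its bracket tends to the finite value $R^3-R^{-1}$ while $w(\zeta)\to 0$. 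Adding the two limits gives
\[
V=2\,e^{\p_0(\bar\p_0-\p_0)F}\,\p_{t_1}\bar\p_0F,
\]
which is (\ref{V}).

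Finally, I need to justify passing to the limit inside the identity (\ref{bil8}). The identity holds for all $\zeta$ as an equality of formal series in $\zeta^{-1}$. Once $V(\zeta)$ is known to have a finite limit, every term of (\ref{bil8}) is regular at $\zeta=\infty$, so setting $\zeta=\infty$ is legitimate. I therefore expect no obstacle beyond the first-order expansion of $V(\zeta)$ described above.
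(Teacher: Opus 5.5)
Your proposal is correct and follows the paper's own argument: set $\zeta=\infty$ in (\ref{bil8}) using $g(z,\infty)=w(z)$ and $v(\infty)=R$, then resolve the $0\cdot\infty$ limit in (\ref{V(z)}) to obtain (\ref{V}); your first-order expansion in $\zeta^{-1}$ just fills in the details the paper leaves implicit. You are also right that the factor $v(\zeta)$ in the last term of (\ref{bil8a}) is a typo for $v(z)$, consistent with (\ref{YB3}).
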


\noindent
{\it Proof.}
This equation is the limiting case of (\ref{bil8}) as
$\zeta \to \infty$. The particular value of $V$ (\ref{V}) follows
from (\ref{V(z)}) in this limit.
\square

Equation (\ref{bil8}) defines an elliptic curve ${\cal E}$ 
together
with a marked point on it. More precisely, a point $P\in {\cal E}$
of the curve is a pair of functions 
$(v(z), g(z, \zeta ))$ satisfying equation
(\ref{bil8}), and $z^{-1}$ is a local parameter in a neighborhood
of infinity on the curve. 
Position of the marked point is parametrized
by the variable $\zeta$. The following proposition shows that
the elliptic modulus of the curve is indeed the same 
for all $\zeta$.

\begin{proposition}
The function 
$\displaystyle{
v^2(\zeta )+v^{-2}(\zeta )-\frac{V^2(\zeta )}{4 v^2(\zeta )}}
$
does not depend on $\zeta$. In particular, letting $\zeta$ 
tend to $\infty$,
we have:
\beq\label{VV}
v^2(\zeta )+v^{-2}(\zeta )-\frac{V^2(\zeta )}{4 v^2(\zeta )}
=R^2 +R^{-2}-
\frac{V^2}{4R^2}.
\eeq
\end{proposition}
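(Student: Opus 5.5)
The statement should reduce to a one-variable algebraic identity. The idea is to combine the explicit formula (\ref{V(z)}) for $V(\zeta)$ with the corollary's relation (\ref{bil8a}), taken at the point $\zeta$ instead of $z$. Both involve only $v(\zeta)$, $w(\zeta)$ and the $\zeta$-independent constants $R$ and $V$. So eliminating $w(\zeta)$ between them should express $V^2(\zeta)$ through $v(\zeta)$, $R$ and $V$ alone, and the claimed identity should then drop out. Throughout, write $v=v(\zeta)$ and $w=w(\zeta)$.

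\emph{Step 1: simplify $V(\zeta)$.} Rewrite (\ref{V(z)}) over a common denominator:
$$
V(\zeta)=\frac{1}{Rw}\Bigl(v^2-R^2+w^2(R^2v^2-1)\Bigr).
$$
Relation (\ref{bil8a}) at $\zeta$ reads $w^2(R^2v^2-1)+Vvw+R^2-v^2=0$. Substituting $w^2(R^2v^2-1)=-Vvw-R^2+v^2$ gives
$$
R\,V(\zeta)=2u-Vv, \qquad u:=\frac{v^2-R^2}{w}.
$$

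\emph{Step 2: a quadratic relation for $u$.} Divide (\ref{bil8a}) by $w^2$ to get $(R^2-v^2)w^{-2}+Vv\,w^{-1}+R^2v^2-1=0$. Multiply by $(v^2-R^2)$. The first term becomes $-u^2$ and the second becomes $Vvu$, so
$$
u^2=Vvu+(v^2-R^2)(R^2v^2-1).
$$

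\emph{Step 3: square and conclude.} Square Step 1 and use Step 2 to eliminate $u$. The cross terms $-4Vvu$ and $4Vvu$ cancel, leaving
$$
R^2V^2(\zeta)=4u^2-4Vvu+V^2v^2=4(v^2-R^2)(R^2v^2-1)+V^2v^2.
$$
Dividing by $4R^2v^2$ and expanding $(v^2-R^2)(R^2v^2-1)/(R^2v^2)=v^2+v^{-2}-R^2-R^{-2}$, we get
$$
\frac{V^2(\zeta)}{4v^2(\zeta)}-v^2(\zeta)-v^{-2}(\zeta)=\frac{V^2}{4R^2}-R^2-R^{-2}.
$$
This is (\ref{VV}), and its right-hand side is manifestly independent of $\zeta$.

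The only delicate point is finding the right elimination of $w(\zeta)$; the substitution $u=(v^2-R^2)/w$ makes the linear terms cancel on squaring. One could instead argue conceptually: the combination in question is, up to sign, the invariant $(\alpha^2+\beta^2-\gamma^2)/(\alpha\beta)$ of the symmetric biquadratic $\alpha(x^2y^2+1)+\beta(x^2+y^2)+2\gamma xy=0$, here with $\alpha=v^2(\zeta)$, $\beta=-1$, $\gamma=V(\zeta)/2$. The curves (\ref{bil8}) for different $\zeta$ are all parametrized by $z$ and hence birationally equivalent. However, the direct computation above avoids justifying that birationality and is what I would write out.
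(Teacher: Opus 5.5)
Your computation is correct and is essentially the paper's proof: the paper also verifies the identity by direct calculation, combining the formula (\ref{V(z)}) for $V(\zeta)$ with the curve relation (\ref{bil8a}) taken at the point $\zeta$. The only difference is cosmetic. The paper uses (\ref{bil8a}) in the form (\ref{VVV}) to eliminate the constant $V$ rather than $w(\zeta)$, so the claim becomes an identity in the independent quantities $v(\zeta)$, $w(\zeta)$, $R$; you instead eliminate $w(\zeta)$ through $u=(v^2-R^2)/w$, and both routes work.
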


\noindent
{\it Proof.}
This equality can be verified by a direct calculation which uses
expression (\ref{V(z)}) for $V(\zeta )$ and equation (\ref{bil8a})
in the form
\beq\label{VVV}
V=\frac{v^2(\zeta )+w^2(\zeta )-R^2 -R^2v^2(\zeta )
w^2(\zeta )}{v(\zeta )\, w(\zeta )}.
\eeq
\square

\noindent
Both the elliptic modulus and the marked point are dynamical
variables: they depend on the times $t_0, \bar t_0, {\bf t}$.

Letting $a_3\to \infty$ in equations (\ref{bil5a}),
(\ref{bil6a}) and denoting
$a_1=z$, $a_2=\zeta$, we can express 
$g(z, \zeta )=(z^{-1}-\zeta^{-1})e^{\nabla (z)\nabla (\zeta )F}$ 
through
the functions $w(z)$, $v(z)$ in two ways:
\beq\label{g1}
g(z, \zeta )=
\frac{w(z)v(\zeta )-w(\zeta )v(z)}{R\Bigl (1-
w(z)w(\zeta )v(z)v(\zeta )\Bigr )},
\eeq
as it follows from (\ref{bil5a}) or
\beq\label{g1a}
g(z, \zeta ) =\frac{R\, (w(\zeta )v(\zeta )-w(z )v(z ))}{w(z )w(\zeta )-
v(z )v(\zeta )},
\eeq
as it follows from (\ref{bil6a}).
The two expressions are equivalent because
the functions $w,v$ are constrained by equation (\ref{bil8a}).
They are the main generating equations of the hierarchy. Expanding,
say, (\ref{g1}) in inverse powers of $z, \zeta$, 
one obtains various differential
equations containing a finite number of derivatives with respect
to different times. This feature is common
for all dispersionless hierarchies: general 
mixed second order derivatives  
$\p_{t_n}\p_{t_m}F$ (of which a generating function is the
left-hand side) are expressed through some special second order
derivatives contained in the functions $v(z), w(z)$, i.e.,
$\p_{t_k}\p_0F$, $\p_{t_k}\bar \p_0F$ and $\p_0 \bar \p_0F$. 
Values of these latter derivatives can be regarded as Cauchy data
for equations of the hierarchy.
However, in contrast to the dispersionless KP hierarchy,
they are not independent since they are constrained by
the equation of the elliptic curve.
Equating the right-hand sides of (\ref{g1}), (\ref{g1a}),
we get the following relation:
\beq\label{R2}
R^2 = \frac{\Bigl (w(z)v(\zeta )-w(\zeta )v(z)\Bigr )\Bigl (v(z)v(\zeta )-
w(z)w(\zeta )\Bigr )}{\Bigl (w(z)v(z)-w(\zeta )v(\zeta )\Bigr )
\Bigl (1-w(z)w(\zeta )
v(z)v(\zeta )\Bigr )},
\eeq
which is used later.

Equations (\ref{g1}), (\ref{g1a}) are also remarkable 
for the following reason.

\begin{theorem}\label{theorem:equivalence}
Let the functions $w, v$ be constrained by equation 
of the elliptic curve (\ref{bil8a}), then
equation (\ref{g1}) (or (\ref{g1a}))
is equivalent to the whole hierarchy
(\ref{bil7}).
\end{theorem}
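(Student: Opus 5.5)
One direction is immediate. Equation (\ref{g1}) is the $a_3\to\infty$ limit of (\ref{bil5a}). Equation (\ref{g1a}) is the same limit of (\ref{bil6a}). Both of these are special cases of (\ref{bil7}), and (\ref{bil8a}) was derived from them. So the real task is the converse: assuming (\ref{bil8a}) and (\ref{g1}), we must show that (\ref{bil7}) holds for all $P^\pm$ with $P^++P^-$ odd.

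I would do this in two stages.

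\emph{Stage 1: recover the three-point relations.} From (\ref{g1}) and (\ref{g1a}), which agree modulo (\ref{bil8a}), the quantities $g(a_i,a_j)$ become explicit rational functions of the pairs $(w(a_i),v(a_i))$, i.e. of points $P_i$ on the curve ${\cal E}$ given by (\ref{bil8a}). Substituting these into (\ref{bil5a}) and (\ref{bil6a}) turns each into a rational identity in three points of ${\cal E}$. I would verify these identities using the uniformization announced in the paper: write $w=w(u)$, $v=v(u)$ as elliptic functions of a uniformizing parameter $u$ on ${\cal E}$, chosen so that the point at infinity $z=\infty$, where $w=0$ and $v=R$, is a distinguished point $u=u_0$. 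Formula (\ref{g1}) should then take a product form, roughly
$$g(z,\zeta)\propto \frac{\theta(u-u')}{\theta(u-u_0)\,\theta(u'-u_0)}\times(\hbox{$\tau$-dependent constants}).$$
In this form (\ref{bil5a}) and (\ref{bil6a}) become Fay-type trisecant identities for theta functions.

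\emph{Stage 2: the general identity.} I would prove (\ref{bil7}) for all $(P^+,P^-)$ by treating it, after substitution, as an identity between elliptic functions of one point, say $a_1$, with the others fixed. Both sides are elliptic in $u_1$. I would check that the poles cancel: the poles at $u_1=u_i$ come from terms with $s=1$ and $s=i$ and cancel pairwise by antisymmetry of $g$. Then the two sides agree at one point, which we can take to be $a_1\to\infty$. There the identity reduces to the same identity with fewer points, so induction on $P^++P^-$ starting from $(3,0)$, $(2,1)$ (Stage 1) closes the argument. The symmetric cases $(0,3)$, $(1,2)$ follow by the involution $v\to v^{-1}$, $a\leftrightarrow b$.

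Alternatively, one could avoid uniformization by determinant (Pfaffian) formulas. Using (\ref{g1}) directly, the sums in (\ref{bil7}) resemble Pfaffian expansions, and one could try to show (\ref{bil7}) is a Pfaffian identity with entries $g(a_i,a_j)$. I would try the elliptic route first.

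The main obstacle is finding the correct uniformizing parametrization in which $g$ from (\ref{g1}) has a clean theta-function form consistent with the marked point at infinity. Closely tied to this is controlling the pole structure of $v^{\pm1}(a_s)$ in the inductive residue argument, where extra poles at the zeros and poles of $v$ must also be shown to cancel.
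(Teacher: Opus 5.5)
Your overall plan (uniformize (\ref{bil8a}), then prove (\ref{bil7a}) as an identity between elliptic functions by pole counting) is the paper's plan. Two steps, however, would not go through as written.

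\textbf{The form of $g$.} The key input, Proposition~\ref{proposition:fg}, is missing, and your guess for it points the wrong way. Since $g(z,\infty)=w(z)$ is finite (and vanishes at $z=\infty$), $g$ cannot blow up when an argument reaches the marked point. Your prime-form-type guess $\theta(u-u')/\bigl(\theta(u-u_0)\theta(u'-u_0)\bigr)$ (with $\theta$ vanishing at $0$) would do exactly that. With $w=\sn(u(z))$, $v=\sn(u(z)+\eta)$, $R=\sn(\eta)$ and $u(\infty)=0$, formula (\ref{g1}) becomes $g(z,\zeta)=\sn\bigl(u(z)-u(\zeta)\bigr)$. This is a function of the difference only, with poles where $u(z)-u(\zeta)\equiv\tau/2$.

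Your Stage 2 pole bookkeeping depends on this. Besides the poles at $u_1\equiv u_i$ that you treat, the difference of the two sides of (\ref{bil7b}) has two further families of potential poles:
\begin{itemize}
\item at $u_1\equiv v_k+\tau/2$, coming from $g(a_1,b_k)$ and from the $k$-th term of the second sum; these cancel by $\sn(x+\tau/2)=1/\sn(x)$;
\item at $u_1\equiv\tau/2-\eta$, coming from $v(a_1)$ on both sides; these cancel precisely because $P^++P^-$ is odd.
\end{itemize}

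\textbf{The closing step of the induction.} This step is wrong. Putting $a_1=\infty$ (i.e.\ $u_1=0$) does not remove the point: the factors $g(a_s,\infty)=w(a_s)$ and $v(\infty)=R$ survive, so you do not get the identity with fewer points. Nor could you drop a single point, because (\ref{bil7a}) with $P^++P^-$ even is false. For $(P^+,P^-)=(2,0)$ and $a_2=\infty$ it would give $v-R=Rwv$, which is incompatible with (\ref{bil8a}) for $R\neq 0$.

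In fact no evaluation is needed. Each term of (\ref{bil7b}) contains an odd number of $u_1$-dependent factors $\sn^{\pm1}$. So the difference of the two sides is antiperiodic under $u_1\to u_1+1$ and periodic under $u_1\to u_1+\tau$. Once it is pole-free, it is a constant that changes sign, hence zero. Stage 1 and the base cases then become superfluous.

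If you want to keep an induction, specialize $u_1=v_1$, i.e.\ $a_1=b_1$. This kills both $s=1$ terms via $\sn(0)=0$, cancels the remaining $a_1,b_1$ factors, and reduces $(P^+,P^-)$ to $(P^+-1,P^--1)$ with the same parity. When $P^-=0$, use $u_1=u_2+\tau/2$ instead.

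The corrected argument is equivalent to the paper's. There, after shifting the $v_k$ and $\eta$ by $\tau/2$, the left-hand side is recognized as the sum of residues of the single elliptic function ${\cal F}$ in (\ref{fff}). The parity condition is what makes $1$ a period of ${\cal F}$.
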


\noindent
To prove this, the elliptic parametrization of the Pfaff-type hierarchies
first suggested in \cite{AZ14,AZ15} seems to be unavoidable. It consists
in a special change of dynamical variables based
on (and motivated by) uniformization of the elliptic 
curve defined by equation
(\ref{bil8a}) by means of elliptic functions, which is considered
in Section \ref{section:elliptic}.

\subsection{Yang-Baxter formulation}
\label{section:yang-baxter}

In this section we suggest a new formulation of the 
dispersionless hierarchy
which has the form of the Yang-Baxter equation for a $4\times 4$
$R$-matrix of Baxter's form (the one associated with the
8-vertex model).

Let $w(z)$, $v(z)$, $g(z, \zeta )$ be the functions given by
(\ref{gvw}). Introduce also the following function:
\beq\label{f1}
f(z, \zeta )=v(\zeta )\, \frac{Rw(\zeta )+
v(z)g(z, \zeta )}{v(z)w(\zeta ) +R g(z, \zeta )},
\eeq
where $R$ is as in (\ref{R}). From equations (\ref{bil6c}) we see
that the function $f$ has another (equivalent) definition:
\beq\label{f1a}
f(z, \zeta )=v^{-1}(\zeta )\, \frac{Rv(z )+
w(\zeta )g(z, \zeta )}{1+ R v(z)w(\zeta )g(z, \zeta )}.
\eeq
Note that
\beq\label{f2}
f(z ,\infty )=v(z), \qquad f(z , z)=R.
\eeq

\begin{remark}
Recalling that $g(z ,\infty )=w(z)$, we may informally say that
the pair of functions $(f, g)$ is an analog of the pair $(v,w)$
(which encodes the Cauchy data)
``dressed'' by flows with respect to higher times of the hierarchy.
\end{remark}

\noindent
We will also need the 
function $f(\infty , z)$. In the limit
$\zeta \to \infty$ in $f(\zeta , z)$ both numerator and denominator
in (\ref{f1a}) tend to zero. Resolving the indeterminacy, we get:
\beq\label{f2a}
f(\infty , z)=v(z)\, \frac{p(z) +v_0}{p(z)-v_0},
\eeq
where
\beq\label{f2b}
\left.
p(z)=\p_{\zeta^{-1}}\Bigl (\frac{g(\zeta , z)}{w(z)}\Bigr )
\right |_{\zeta^{-1}\to 0}= z -\nabla (z)\p_{t_1}F,
\eeq
\beq{\label{v0}
v_0=-\p_{\zeta^{-1}} \log v(\zeta )\Bigr |_{\zeta^{-1}\to 0}=
-\bar \p_0 \p_{t_1}}F.
\eeq

Let us combine the functions $f$, $g$ into the 
following $R$-matrix of Baxter's form:
\beq\label{R0}
{\sf R}(z, \zeta )  =\left (
\begin{array}{cccc}
f(z, \zeta ) & 0 & 0& R g(z, \zeta )f(z, \zeta )
\\ \\
0 & g(z, \zeta ) & R & 0
\\ \\
0 & R & g(z, \zeta ) & 0
\\ \\
 R g(z, \zeta )f(z, \zeta ) & 0 & 0&  f(z, \zeta )
\end{array}
\right ) .
\eeq
In particular,
\beq\label{R0a}
{\sf R}(z, \infty )  =\left (
\begin{array}{cccc}
v(z) & 0 & 0& R w(z)v(z)
\\ \\
0 & w(z)  & R & 0
\\ \\
0 & R & w(z) & 0
\\ \\
 R w(z)v(z) & 0 & 0&  v(z)
\end{array}
\right ) 
\eeq
and
\beq\label{P1a}
{\sf R}(z, z )=R \, {\sf P},
\eeq
where ${\sf P}$ is the permutation matrix (\ref{P1}).
Since $R$, $f$ and $g$ are functions of the times, the $R$-matrix
depends on the times, too: ${\sf R}(z, \zeta )=
{\sf R}(z, \zeta ; t_0, \bar t_0, {\bf t})$.
As a rule, we will not indicate the dependence of times explicitly.

\begin{theorem}\label{theorem:yang-baxter}
The Yang-Baxter equation for the $R$-matrix 
(\ref{R0}), i.e.,
\beq\label{YB1}
{\sf R}_{12}(z_1, z_2) {\sf R}_{13}(z_1, z_3) {\sf R}_{23}(z_2, z_3)=
{\sf R}_{23}(z_2, z_3) {\sf R}_{13}(z_1, z_3) {\sf R}_{12}(z_1, z_2),
\eeq
is equivalent to the dispersionless modified DKP hierarchy
(\ref{bil7}).
\end{theorem}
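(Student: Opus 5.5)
We plan to reduce the Yang-Baxter equation (\ref{YB1}) to Baxter's classical conditions on the weights and then compare these conditions with the equations of the hierarchy in the form of Theorem \ref{theorem:equivalence}. For symmetric matrices of the form (\ref{R10a}), the YB equation ${\sf R}_{12}{\sf R}'_{13}{\sf R}''_{23}={\sf R}''_{23}{\sf R}'_{13}{\sf R}_{12}$ is equivalent to a finite system of cubic relations on $(a,b,c,d)$, $(a',b',c',d')$, $(a'',b'',c'',d'')$ (Baxter's six independent equations, e.g. $ac'a''+dc'd''=ba'c''+ca'b''$ and so on). Baxter showed that these are solvable with all weights nonzero iff the two invariants
\[
\Delta=\frac{a^2+b^2-c^2-d^2}{2(ab+cd)},\qquad \Gamma=\frac{ab-cd}{ab+cd}
\]
coincide for the three matrices, with the remaining relations then fixing the spectral-parameter dependence. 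Our first step is therefore to write these six relations with $a=f(z_i,z_j)$, $b=g(z_i,z_j)$, $c=R$, $d=Rgf$.

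For the direction ``hierarchy $\Rightarrow$ YB'', we use the elliptic parametrization of Section \ref{section:elliptic}, which we expect to be the real engine. Assuming the hierarchy holds, the curve (\ref{bil8a}) is uniformized by elliptic functions of a parameter $u(z)$ with modulus $\tau$ depending on the times. We would then express $g(z,\zeta)$, $v$, $w$ and, via (\ref{f1}), $f(z,\zeta)$ as ratios of theta functions of $u(z)-u(\zeta)$ (plus the marked-point shift). After a gauge rescaling, we would check that $(f,g,R,Rgf)$ coincide with Baxter's elliptic weights $\theta_4(2\eta)\theta_4(u)\theta_1(u+2\eta)$, etc., at spectral parameter $u(z)-u(\zeta)$. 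The YB equation then follows from Baxter's theorem, because additivity $u_{13}=u_{12}+u_{23}$ holds automatically.

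Before that, we verify that the invariants are constant. Using (\ref{f1}), (\ref{f1a}) and (\ref{bil8}), $\Delta$ and $\Gamma$ computed from ${\sf R}(z,\zeta)$ should be independent of $z,\zeta$. At $\zeta=\infty$, via (\ref{R0a}), they reduce to explicit functions of $R$ and $V$, consistent with the $\zeta$-independence established in (\ref{VV}).

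For the converse ``YB $\Rightarrow$ hierarchy'', we specialize $z_3\to\infty$ in (\ref{YB1}) using (\ref{R0a}), and also $z_2=z_3$ using (\ref{P1a}). Setting $z_3=\infty$, the six Baxter relations involve only $g(z_1,z_2)$, $f(z_1,z_2)$, $v(z_i)$, $w(z_i)$ and $R$. Eliminating $f$ through its defining relation (\ref{f1}), we aim to extract (\ref{g1}) and (\ref{g1a}). Equating these two expressions forces the constraint (\ref{bil8a}) in the form (\ref{R2}). Theorem \ref{theorem:equivalence} then gives the full hierarchy (\ref{bil7}).

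The main obstacle is this last elimination. We must show that the Baxter relations at $z_3=\infty$ imply exactly (\ref{g1}) and the curve equation, and not a weaker consequence, while avoiding spurious branches such as $f=\pm g$. Our first attempt would use the two linear-in-$d$ relations, solving for $g(z_1,z_2)$ and identifying the result with (\ref{g1}) and (\ref{g1a}). The elliptic uniformization would then handle any consistency check between them.
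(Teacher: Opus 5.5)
Your plan is sound. Its forward direction (hierarchy $\Rightarrow$ Yang--Baxter via $g={\sf sn}(u(z)-u(\zeta))$, $f={\sf sn}(u(z)-u(\zeta)+\eta)$, $R={\sf sn}(\eta)$, and additivity of $u(z_i)-u(z_j)$) is what the paper relies on through the elliptic parametrization. The converse direction, however, takes a genuinely different route from the paper.

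The paper goes through Baxter's invariants. Since $c=R$ and $d=abc$, $\Gamma=R^2$ holds automatically, and equality of $\Delta$ gives the curve (\ref{bil8b}) for $(f,g)$ with $V$ independent of $z,\zeta$. Setting $\zeta=\infty$ yields (\ref{bil8a}). Inserting (\ref{f1}) gives (\ref{YB5}). Antisymmetrizing in $z\leftrightarrow\zeta$ and using (\ref{R2}) then recovers (\ref{g1a}).

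You instead read the hierarchy directly off the components of (\ref{YB1}) at $z_3=\infty$, and this is easier than you fear. Take ${\sf R}'={\sf R}(z_1,\infty)$ and ${\sf R}''={\sf R}(z_2,\infty)$ as in (\ref{R0a}). Then the third relation of (\ref{system1}), $cb'a''+bd'd''=ca'b''+bc'c''$, divided by $R$, is literally (\ref{g1}). The sixth, $da'a''+ac'd''=db'b''+ad'c''$, divided by $Rf$, is literally (\ref{g1a}). Neither involves $f$ after this division, so no elimination is needed and no spurious branches such as $f=\pm g$ can appear. Of the remaining relations, the fourth is just the definition (\ref{f1}) and the first is (\ref{f1a}); the second and fifth, combined with (\ref{f1}), give the two relations (\ref{bil6c}).

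Equating (\ref{g1}) and (\ref{g1a}) gives (\ref{R2}). A short computation shows that (\ref{R2}) says exactly that the right-hand side of (\ref{YB4}) takes equal values at $z$ and $\zeta$, i.e. (\ref{bil8a}) holds with $z$-independent $V$. Theorem \ref{theorem:equivalence} then finishes the argument.

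The two routes buy different things. Yours avoids Baxter's determinant analysis altogether and shows that the Yang--Baxter equation with one spectral parameter sent to infinity already contains the whole hierarchy. The paper's route makes the identification of Baxter's invariant $\Delta$ (the spectral curve) with the dynamical curve explicit.

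Two small slips. Your sample relation $ac'a''+dc'd''=ba'c''+ca'b''$ is not one of Baxter's equations; compare (\ref{system1}). And the specialization $z_2=z_3$ via (\ref{P1a}) carries no information, because ${\sf P}_{23}{\sf R}_{13}{\sf P}_{23}={\sf R}_{12}$ makes it an identity.
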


\noindent
{\it Proof.}
The proof is based on the analysis of the Yang-Baxter equation
first made by Baxter himself for his solution of the 8-vertex
model. We outline his arguments in the next section using the
more familiar notation commonly adopted in the analysis 
of the matrix of Boltzmann weights in the 8-vertex model.

Here we only present the result, translating it from the language
of vertex models and rewriting in the notation introduced above.
Namely, equation (\ref{curve1}) that follows from the 
Yang-Baxter equation in our case implies that
\beq\label{bil8b}
R^2 \Bigl (f^2(z, \zeta )g^2(z, \zeta )+1 \Bigr )
-f^2(z, \zeta )-g^2(z, \zeta )+ V f(z, \zeta )g(z, \zeta )=0,
\eeq
where 
$V$ is a constant (which may depend 
on the times of the hierarchy). Putting here $\zeta =\infty$, 
we obtain the equation
\beq\label{YB3}
R^2 \Bigl (v^2(z)w^2(z)+1 \Bigr )
-v^2(z)-w^2(z)+ V v(z)w(z)=0,
\eeq
which is identical to (\ref{bil8a}). From it we can express 
the constant $V$
as
\beq\label{YB4}
V=\frac{w^2(z)+v^2(z)-R^2 -R^2 w^2(z)v^2(z)}{w(z)v(z)}
\eeq
(although the functions in the right-hand side are functions
of $z$, the 
Yang-Baxter equation (\ref{YB1}) implies that the whole
expression does not depend
on it!). Plugging $f$ from (\ref{f1}) and $V$ from (\ref{YB4})
into (\ref{bil8b}), we obtain, after some transformations:
\beq\label{YB5}
v^2 (\zeta )\Bigl (g^2 (z, \zeta ) v^2 (z)+1\Bigr )-
g^2 (z, \zeta ) - v^2 (z) +V(\zeta )g (z, \zeta )v (z)=0,
\eeq
where
\beq\label{V(z)a}
V(\zeta )=\frac{R^2 \Bigl (w^2 (\zeta )v^2(\zeta )-1\Bigr ) +v^2(\zeta )-
w^2(\zeta )}{R\, w(\zeta )}.
\eeq
It is easy to see that $V(\infty )=V$ given by (\ref{YB4}), and
equation (\ref{YB5}) at $\zeta =\infty$ coincides with (\ref{YB3}).

Subtracting equation (\ref{YB5}) from the same equation with
the change $z\leftrightarrow \zeta$, we get:
\beq\label{g100}
g(z, \zeta )=\frac{2\, (v^2(z)-v^2(\zeta ))}{V(z)v(\zeta )+
V(\zeta )v(z)}.
\eeq
Plugging here $V(z)$, $V(\zeta )$ from (\ref{V(z)a}) with
$R^2$ from (\ref{R2}), we obtain, after some transformations, that
$g(z, \zeta )$ is given by the same expression
as in (\ref{g1a}) (or in (\ref{g1})).
\square

\begin{proposition}
The $R$-matrix (\ref{R0}) satisfies the unitarity condition, 
i.e., it holds
\beq\label{uni}
{\sf R}_{12}(z, \zeta ){\sf R}_{21}(\zeta , z) =F(z, \zeta )\,{\sf I},
\eeq
where 
\beq\label{uni1}
F(z, \zeta )=R^2 - g^2(z, \zeta )
\eeq
and
${\sf I}$ is the $4\! \times \! 4$ identity matrix.
\end{proposition}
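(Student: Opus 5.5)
The plan is to use the block structure of Baxter-type matrices, which reduces the claim to one scalar identity for $f$. First note two symmetry facts. Since $\nabla(z)\nabla(\zeta)F$ is symmetric in $z,\zeta$ and $E(z,\zeta)=-E(\zeta,z)$, we have $g(\zeta,z)=-g(z,\zeta)$. Since the matrix (\ref{R0}) has the symmetric form (\ref{R10a}), ${\sf R}_{21}={\sf R}_{12}$ as operators. Hence ${\sf R}_{21}(\zeta,z)$ is simply the matrix (\ref{R0}) with entries $f'=f(\zeta,z)$, $b'=-g$, $c'=R$ and $d'=Rg(\zeta,z)f(\zeta,z)=-Rgf'$, where $g=g(z,\zeta)$ and $f=f(z,\zeta)$.

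Any matrix of the form (\ref{R10a}) is block diagonal with respect to the splitting $\mathrm{span}({\sf e}_1\otimes{\sf e}_1,{\sf e}_2\otimes{\sf e}_2)\oplus\mathrm{span}({\sf e}_1\otimes{\sf e}_2,{\sf e}_2\otimes{\sf e}_1)$. On each summand the blocks are $2\times 2$ matrices of the form $\alpha\sigma_0+\beta\sigma_1$, so the product can be computed block by block. On the $(b,c)$ block we get
$$
(g\sigma_0+R\sigma_1)(-g\sigma_0+R\sigma_1)=(R^2-g^2)\sigma_0 ,
$$
because the $\sigma_1$ terms $gR-Rg$ cancel. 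This already produces the factor $F(z,\zeta)=R^2-g^2$ of (\ref{uni1}). On the $(a,d)$ block, the off-diagonal entry is $fd'+df'=-Rgff'+Rgff'=0$. The diagonal entry is $ff'+dd'=ff'(1-R^2g^2)$.

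Therefore (\ref{uni}) is equivalent to the single scalar identity
$$
f(z,\zeta)\,f(\zeta,z)=\frac{R^2-g^2(z,\zeta)}{1-R^2g^2(z,\zeta)} .
$$
To prove it, I will pick the two available formulas for $f$ so that the factors cancel as much as possible. For $f(z,\zeta)$ I take (\ref{f1a}), which gives $v^{-1}(\zeta)(Rv(z)+w(\zeta)g)/(1+Rv(z)w(\zeta)g)$. For $f(\zeta,z)$ I take (\ref{f1}) with $z$ and $\zeta$ interchanged and $g(\zeta,z)=-g$ inserted, which gives $v(z)(Rw(z)-v(\zeta)g)/(v(\zeta)w(z)-Rg)$. (If another pairing of (\ref{f1}) and (\ref{f1a}) gives simpler cancellations, I will use that instead, since both formulas are equivalent by (\ref{bil6c}).) After clearing denominators, the identity becomes a polynomial relation in $w(z),v(z),w(\zeta),v(\zeta),R,g$. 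I will then substitute $g$ from (\ref{g1}), or from (\ref{g1a}) if its denominators are more convenient. Finally I will reduce the result modulo the relation (\ref{R2}) for $R^2$, or equivalently modulo the curve (\ref{bil8a}) written at both $z$ and $\zeta$ with a common $V$.

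The main obstacle is this last algebraic reduction. As a guide and a check I will use the special values: at $\zeta=\infty$ the identity reads $v(z)f(\infty,z)=(R^2-w^2)/(1-R^2w^2)$, which should follow from (\ref{f2a}) and the curve. At $\zeta=z$ we have $g=0$ and $f=R$, so both sides equal $R^2$, consistent with (\ref{P1a}). If the direct computation becomes unwieldy, I will instead eliminate $V$ using (\ref{bil8}) in the form (\ref{YB5}) for $g(z,\zeta)$, solved as a quadratic relation, or pass to the elliptic parametrization of Section \ref{section:elliptic}. In that parametrization both sides should become ratios of theta functions, and the identity reduces to a standard addition formula.
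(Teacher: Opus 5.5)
Your proposal is correct. The reduction via the block structure to the scalar identity $f(z,\zeta)f(\zeta,z)\bigl(1-R^2g^2\bigr)=R^2-g^2$ is exactly the paper's (\ref{uni2}); you differ only in how you prove that identity.

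The paper writes the $(f,g)$-curve (\ref{bil8b}) for the pairs $(z,\zeta)$ and $(\zeta,z)$. Since $V$ is common to both and $g$ flips sign, $f(z,\zeta)$ and $-f(\zeta,z)$ are roots of the same quadratic $(R^2g^2-1)X^2+VgX+R^2-g^2=0$. The identity is then Vieta's product formula, which the paper phrases as subtracting the two equations and substituting back. You bypass (\ref{bil8b}), which the paper reaches through Baxter's analysis or by reversing the computation leading to (\ref{YB5}). Instead you work directly with the generating relations of Section~\ref{section:algebraic1}.

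The step you flag as the main obstacle is in fact short. Put $w_1=w(z)$, $w_2=w(\zeta)$, $v_1=v(z)$, $v_2=v(\zeta)$, and read (\ref{g1}) as $Rg=A/B$ and (\ref{g1a}) as $g/R=C/D$. Here $A=w_1v_2-w_2v_1$, $B=1-w_1w_2v_1v_2$, $C=w_2v_2-w_1v_1$ and $D=w_1w_2-v_1v_2$.
\begin{itemize}
\item Either (\ref{f1}) or (\ref{f1a}) then gives $f(z,\zeta)=RB(w_2^2-v_1^2)/\bigl(D(1-w_2^2v_1^2)\bigr)$.
\item $f(\zeta,z)$ follows by $1\leftrightarrow 2$, since $B$ and $D$ are symmetric.
\item The factorizations $D^2-C^2=(w_1^2-v_2^2)(w_2^2-v_1^2)$ and $B^2-A^2=(1-w_1^2v_2^2)(1-w_2^2v_1^2)$ show that both sides of the scalar identity equal $R^2B^2(D^2-C^2)/\bigl(D^2(B^2-A^2)\bigr)$.
\end{itemize}
Nothing beyond the compatibility of (\ref{g1}) and (\ref{g1a}), i.e.\ (\ref{R2}), is needed.

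Your elliptic fallback also closes immediately via Proposition~\ref{proposition:fg}. There the identity reads $\sn(\eta+x)\sn(\eta-x)=(\sn^2\eta-\sn^2x)/(1-\sn^2\eta\,\sn^2x)$, which is the ratio of the first two theta identities in Appendix A.

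As for what each route buys: the paper's argument is shorter and shows that unitarity follows formally from the spectral-curve relation alone. Yours is self-contained in the hierarchy data $w,v,R$, and carried out as above it gives an explicit factorized form of $f$.
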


\noindent
{\it Proof.} It is enough to show that
\beq\label{uni2}
f(z, \zeta )f(\zeta , z)-R^2g^2(z, \zeta )f(z, \zeta )f(\zeta , z)
=R^2 - g^2(z, \zeta )=F(z, \zeta ).
\eeq
Writing equations (\ref{bil8b}) for the pairs $(z, \zeta )$
and $(\zeta , z)$ and subtracting them (taking into account that 
$g(z, \zeta )=-g(\zeta , z)$), we get:
$$
\Bigl (f^2(z, \zeta )-f^2 (\zeta , z)\Bigr )\Bigl (R^2 g^2(z, \zeta )
-1 \Bigr )+Vg(z, \zeta )\Bigl (f(z, \zeta )+f (\zeta , z)\Bigr )=0,
$$
from which we obtain
$$
f (\zeta , z)=f (z, \zeta )+\frac{V\, 
g(z, \zeta )}{R^2 g^2(z, \zeta )-1}.
$$
Plugging this into (\ref{uni2}), we can see that it is
equivalent to (\ref{bil8b}).
\square

\begin{remark}
It might seem that the unitarity of the $R$-matrix 
already implies the Yang-Baxter equation for it (because the both
properties are based on the same identities for elliptic functions).
However, this is not the case: there are many solutions to the
unitarity condition (\ref{uni}) that do not solve the Yang-Baxter
equation. Only the opposite is true: all solutions to the Yang-Baxter
equation of the special form (\ref{R10a}) satisfy the unitarity
condition.
\end{remark}

\begin{remark}
There is also another important property of the Baxter $R$-matrix
called {\it crossing unitarity} (see, e.g., \cite{Zotov15}).
It can be shown that for the $R$-matrix (\ref{R0}) the crossing unitarity 
is equivalent to unitarity of the matrix $\tilde R_{12}(z, \zeta )=
\sigma_2^{(1)}R_{12}(z, \zeta )\sigma_2^{(1)}$:
\beq\label{unia}
\tilde {\sf R}_{12}(z, \zeta )
\tilde {\sf R}_{21}(\zeta , z) =F(z, \zeta )\,{\sf I},
\eeq
with the same function $F(z, \zeta )$ as in (\ref{uni1}).
\end{remark}

Equation (\ref{bil8b}) means that the pair of functions
$(f, \, g)$ represents a point on the elliptic curve
${\cal E}= {\cal E}(t_0, \bar t_0, {\bf t})$ defined by
equations (\ref{bil8}) or (\ref{bil8a}). The same curve
can be defined by yet another equation which is obtained by
letting $z\to \infty$ in (\ref{bil8b}) and using
(\ref{f2a}).

\begin{proposition}
The elliptic curve ${\cal E}$ defined by
equations (\ref{bil8}) or (\ref{bil8a}) is isomorphic to the
curve defined by the equation
\beq\label{bil8c}
R_0^2 \Bigl (w^2 (z)+w^{-2}(z)\Bigr )=p^2(z)+V_0,
\eeq
where 
\beq\label{R0b}
R^2_0=\frac{4R^2 v_0^2}{V^2}, \qquad
V_0=R_0^2\left (R^2 +R^{-2} -
\frac{V^2}{4R^2}\right )
\eeq
($v_0$ is given in (\ref{v0})).
\end{proposition}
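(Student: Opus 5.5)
The plan is to follow the hint in the text: take equation (\ref{bil8b}) with the roles of the arguments exchanged, i.e. for the pair $(f(\zeta ,z), g(\zeta ,z))$, and let $\zeta\to\infty$ with $z$ fixed. Since $g(\zeta ,z)=-g(z,\zeta )$ and $g(z,\infty )=w(z)$, we have $g(\infty ,z)=-w(z)$. By (\ref{f2a}), $f(\infty ,z)=v(z)\,\frac{p(z)+v_0}{p(z)-v_0}$. Substituting these limits gives a relation between the three functions $w(z)$, $v(z)$ and $p(z)$:
\beq\label{plan1}
R^2\bigl (f_\infty^2 w^2+1\bigr )-f_\infty^2-w^2-Vf_\infty w=0,\qquad
f_\infty=v\,\frac{p+v_0}{p-v_0}.
\eeq
Here the sign of the last term flips because $g(\infty ,z)=-w(z)$. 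Relation (\ref{plan1}) defines a curve in the coordinates $(w,p)$ once $v$ is eliminated using (\ref{bil8a}). Establishing that this curve is (\ref{bil8c}) gives the claimed isomorphism: the map $(w,v)\mapsto (w,p)$ is birational because $p$ is a Möbius function of $f_\infty /v$.

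The elimination proceeds as follows.

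\textbf{Step 1.} Write both (\ref{bil8a}) and (\ref{plan1}) as quadratics in the second variable:
\[
(R^2w^2-1)\,v^2+Vw\,v+(R^2-w^2)=0,
\qquad
(R^2w^2-1)\,f_\infty^2-Vw\,f_\infty+(R^2-w^2)=0.
\]
They have the same coefficients except that the linear term has the opposite sign. Hence $f_\infty$ is a root of the second equation precisely when $-f_\infty$ is a root of the first. The two roots of the first equation are $v$ and $v'$, with
\[
v\,v'=\frac{R^2-w^2}{R^2w^2-1},\qquad v+v'=-\frac{Vw}{R^2w^2-1}.
\]
Since generically $f_\infty\neq -v$, we get $f_\infty=-v'$.

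\textbf{Step 2.} Solve for $p$. From $\frac{p+v_0}{p-v_0}=-v'/v$ we obtain
\[
p=v_0\,\frac{v-v'}{v+v'}.
\]
Therefore
\[
p^2=v_0^2\,\frac{(v+v')^2-4vv'}{(v+v')^2}.
\]
Inserting the Vieta expressions gives
\[
p^2=v_0^2\left(1-\frac{4(R^2-w^2)(R^2w^2-1)}{V^2w^2}\right).
\]
Expanding the product, $(R^2-w^2)(R^2w^2-1)/w^2=R^2(R^2+1)\ldots$; more precisely it equals $R^2\bigl(R^2+R^{-2}\bigr)-R^2\bigl(w^2+w^{-2}\bigr)$, since $(R^2-w^2)(R^2w^2-1)=R^4w^2-R^2-R^2w^4+w^2$. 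Hence
\[
p^2=v_0^2-\frac{4R^2v_0^2}{V^2}\bigl(R^2+R^{-2}\bigr)+\frac{4R^2v_0^2}{V^2}\bigl(w^2+w^{-2}\bigr).
\]
With $R_0^2=4R^2v_0^2/V^2$ this reads
\[
R_0^2\bigl(w^2+w^{-2}\bigr)=p^2+R_0^2\left(R^2+R^{-2}\right)-v_0^2 .
\]
Finally, $v_0^2=R_0^2V^2/(4R^2)$, so the constant equals $V_0$ as defined in (\ref{R0b}).

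\textbf{Step 3.} Check birationality. From $(w,p)$ on (\ref{bil8c}) we recover $v$ from $v+v'$ and $v-v'=(v+v')\,p/v_0$, which gives
\[
v=-\frac{Vw}{2(R^2w^2-1)}\Bigl(1+\frac{p}{v_0}\Bigr).
\]
This is rational in $(w,p)$. So the correspondence is birational, and it extends to an isomorphism of the smooth projective models.

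The main obstacle is justifying the limit in (\ref{bil8b}) along with the root choice $f_\infty=-v'$ rather than $f_\infty=-v$. To settle it, I would argue by genericity: $f_\infty=-v$ would force $p\equiv 0$, contradicting $p(z)=z+O(1)$ from (\ref{f2b}). I would also make sure that the constant $V$ in (\ref{bil8b}) is the same at $z=\infty$ as at $\zeta=\infty$, which holds because $V$ does not depend on the spectral parameters.
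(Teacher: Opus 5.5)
Your proposal is correct and is essentially the paper's argument: take (\ref{bil8b}) with first argument $\infty$, insert (\ref{f2a}), and eliminate $v$ using (\ref{bil8a}). The paper does the elimination through the linear relation $2v_0(w^2-R^2)=V(p+v_0)wv$; your observation that $-f(\infty,z)$ is the second root $v'$ of the same quadratic, followed by Vieta's formulas, is a slightly more transparent way to the same result. One harmless sign slip: $(p+v_0)/(p-v_0)=-v'/v$ gives $p=v_0(v'-v)/(v+v')$, so the factor in Step 3 should be $1-p/v_0$ (as written, your formula returns $v'$ rather than $v$); this does not affect Step 2, where only $p^2$ enters.
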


\noindent
{\it Proof.}
Plugging (\ref{f2a}) into (\ref{bil8}) and using (\ref{bil8a}),
we get, after cancellations:
$$
2v_0 \Bigl (w^2(z)-R^2\Bigr )=V \Bigl (p(z)+v_0 \Bigr )w(z)v(z).
$$
This equation allows us to express $v(z)$ through $w(z)$ and
$p(z)$. The substitution into (\ref{bil8a}) yields (\ref{bil8c}).
\square

\begin{remark}
It follows from (\ref{R0b}) that
$$
\frac{V_0}{R_0^2}=R^2 +R^{-2} -
\frac{V^2}{4R^2}=k+k^{-1},
$$
which means that the elliptic modulus $k$ is indeed the same
for the both curves.
\end{remark}

\section{Baxter's $R$-matrix and dispersionless hierarchies}
\label{section:elliptic}

\subsection{Baxter's solutions to the Yang-Baxter equation}
\label{section:baxter}

In its most general form the Yang-Baxter equation reads
\beq\label{RRR1a}
{\sf R}_{12}{\sf R}'_{13} {\sf R}''_{23}=
{\sf R}''_{23} {\sf R}'_{13} {\sf R}_{12}.
\eeq
All solutions to this equation among 
$4\! \times \! 4$ matrices were found in \cite{LP24}.
Here we are interested in the class of solutions having the
special form (\ref{R10a}). They correspond to matrices of Boltzmann
weights for the symmetric 
8-vertex (or 6-vertex) models. In this case, all solutions 
were found by Baxter. In the case of general position all the 4 Boltzmann weights $a,b,c,d$ 
are not identically zero. In this paper we 
consider this general case.
For the 6-vertex model the weight $d$ is identically zero. 
This case is simpler by itself, but the limiting procedure 
from the general case is somewhat tricky and
requires a special consideration to be
presented in the forthcoming paper \cite{SZ-to-appear}.

In the general case, when the fourth weight $d$ is not identically zero,
the solutions are described by the following theorem.

\begin{theorem} \cite{Baxter}
\label{theorem:Baxter}
Let ${\sf R}\in {\rm End} (V\otimes V)$ be a matrix of the form
$$
{\sf R}=\sum_{j=0}^3 w_j \sigma_j \otimes \sigma_j,
$$
where
$$
\begin{array}{l}
w_0=\frac{1}{2}(a+b), \quad w_3=\frac{1}{2}(a-b), 
\quad w_1=\frac{1}{2}(c+d), \quad w_2=\frac{1}{2}(c-d)
\end{array}
$$
(with the assumption that all $w_i$'s are nonzero),
i.e., in the basis $\{ {\sf e}_1\otimes {\sf e}_2, \;
{\sf e}_1\otimes {\sf e}_2, \; {\sf e}_2\otimes {\sf e}_1, \; 
{\sf e}_2 \otimes {\sf e}_2 \}$ of the space $V\otimes V$ the matrix is
\beq\label{R1}
{\sf R}=\left (
\begin{array}{cccc}
a & 0 & 0& d 
\\ 
0 & b & c & 0
\\ 
0 & c & b & 0
\\ 
d & 0 & 0& a
\end{array}
\right ) ,
\eeq
where $a\neq \pm b$ and $c\neq \pm d$. Assume also that 
all $a,b,c,d\neq 0$.
Further, let ${\sf R}'$, ${\sf R}''$ be matrices of the same structure
with nonzero matrix elements $a', b', c', d'$ and $a'', b'', c'', d''$
respectively. Then the equation
\beq\label{RRR1}
{\sf R}_{12}{\sf R}'_{13} {\sf R}''_{23}=
{\sf R}''_{23} {\sf R}'_{13} {\sf R}_{12}
\eeq
holds if and only if the quantities
\beq\label{RRR2}
\begin{array}{l}
\displaystyle{
\Gamma = \frac{cd}{ab},
\qquad
\Delta = \frac{a^2 +b^2 -c^2 -d^2}{ab}}
\end{array}
\eeq
are the same for all the three matrices, i.e., 
$\Gamma =\Gamma' =\Gamma ''$, $\Delta =\Delta '=\Delta''$.
\end{theorem}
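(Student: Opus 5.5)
The plan is to write out the Yang--Baxter equation (\ref{RRR1}) componentwise and reduce it to a small number of scalar relations. Both sides act in the 8-dimensional space $V_1\otimes V_2\otimes V_3$. Each factor preserves the parity of the number of ${\sf e}_2$'s, because the matrix (\ref{R1}) only flips either zero or two spins. So the equation splits into two $4\times 4$ blocks, the even-parity and the odd-parity subspaces. The spin-reversal symmetry $\sigma_1\otimes\sigma_1\otimes\sigma_1$ commutes with every factor and exchanges these two blocks, so it is enough to treat one of them.

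Comparing matrix elements in that block, and discarding the relations that follow from the symmetries $1\leftrightarrow 3$ and transposition, I expect Baxter's six bilinear-in-pairs relations:
\begin{align*}
&ac'a''+dd'c''=bc'b''+cd'd'' ,\quad ab'c''+dc'd''=ba'c''+cd'd'' \ (\text{up to Baxter's labelling}),\\
&cb'a''+dc'd''=ca'b''+dd'c'' ,\quad \text{and three analogues with } a\leftrightarrow b,\ c\leftrightarrow d \text{ suitably swapped}.
\end{align*}
Their exact form will be fixed by direct multiplication; this is routine.

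For the ``if'' direction, I would regard these relations as a homogeneous linear system in the unknowns $(a'',b'',c'',d'')$. The coefficients are built from $a,b,c,d,a',b',c',d'$. A nonzero solution exists only if suitable $3\times3$ minors vanish. Following Baxter, I would take the three equations that are linear in $(a'',b'',c'',d'')$ and eliminate those unknowns. This yields two compatibility conditions relating the unprimed and primed weights, which can be rearranged into
\[
\frac{cd}{ab}=\frac{c'd'}{a'b'},\qquad \frac{a^2+b^2-c^2-d^2}{ab}=\frac{a'^2+b'^2-c'^2-d'^2}{a'b'}.
\]
Repeating the elimination with the roles of the matrices permuted (eliminating $a,b,c,d$ instead) gives $\Gamma'=\Gamma''$ and $\Delta'=\Delta''$. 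The nondegeneracy assumptions $a\ne\pm b$, $c\ne\pm d$ and $a,b,c,d\neq0$ are used to divide by the relevant minors and to exclude degenerate rank drops.

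For the ``only if'' direction, assume $\Gamma$ and $\Delta$ are common to all three matrices. The cleanest route is to use the standard parametrization. The conditions $\Gamma=$ const and $\Delta=$ const define, in projective coordinates $(a:b:c:d)$, an intersection of two quadrics $ab\Gamma=cd$ and $a^2+b^2-c^2-d^2=\Delta ab$. For generic values this is an elliptic curve, and it is uniformized by Jacobi functions:
\[
a=\rho\,\mathrm{snh}(\eta+u),\quad b=\rho\,\mathrm{snh}(u),\quad c=\rho\,\mathrm{snh}(\eta),\quad d=\rho k\,\mathrm{snh}(\eta)\mathrm{snh}(u)\mathrm{snh}(\eta+u),
\]
with $\Gamma,\Delta$ determining $k,\eta$. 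Each matrix then corresponds to parameters $u,u',u''$. One possible route from here is to show that the linear system has a one-dimensional solution space, and that the point $u''=u'-u$ lies in it. This gives existence, so (\ref{RRR1}) holds whenever ${\sf R}''$ is proportional to that solution. Uniqueness of the point on the curve solving the system then upgrades this to an equivalence, and the verification reduces to addition theorems for $\mathrm{sn}$.

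I expect the main obstacle to be this last step: checking that the six scalar identities hold identically on the curve. Equivalently, one must show that the rank-one condition of the linear system singles out exactly the curve $\Gamma''=\Gamma,\ \Delta''=\Delta$ and not spurious components. I would first try to avoid elliptic identities entirely, by showing algebraically that the solution $(a'':b'':c'':d'')$ of the linear system automatically satisfies both quadric equations. This uses the relations among the primed and unprimed weights, and it amounts to a direct but lengthy elimination.
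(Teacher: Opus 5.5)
Your treatment of necessity follows the paper. You write (\ref{RRR1}) as the six trilinear equations (\ref{system1}) and treat them as a homogeneous linear system in $(a'',b'',c'',d'')$. From this you extract $\Gamma=\Gamma'$ and $\Delta=\Delta'$, and then obtain the conditions on ${\sf R}''$ from a symmetry. The six equations are invariant under exchanging unprimed and doubly-primed weights, which comes from conjugation by ${\sf P}_{13}$; this is what your ``eliminate $a,b,c,d$ instead'' amounts to.

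The details need correcting, though:
\begin{itemize}
\item The equations you display are not the right ones; for example, the first is $ac'a''+da'd''=bc'b''+ca'c''$.
\item The elimination is miscounted. All six equations are linear in the doubly-primed weights, and three homogeneous equations in four unknowns impose no condition at all. A nonzero solution requires the $4\times 4$ minors of the $6\times 4$ coefficient matrix to vanish, not $3\times 3$ ones.
\item The paper takes the determinant of equations 1, 3, 4, 6, which vanishes iff $\Gamma=\Gamma'$. These four equations then fix $(a'':b'':c'':d'')$ explicitly, and substituting into the second equation gives $\Delta=\Delta'$.
\item What you call the ``if'' direction is actually the ``only if'' one, and vice versa.
\end{itemize}

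The genuine gap is the converse. Equality of $\Gamma,\Delta$ for the three matrices only places ${\sf R}''$ somewhere on the curve, with $u''$ arbitrary. But you yourself note that the solution space for ${\sf R}''$ is one-dimensional, so (\ref{RRR1}) holds only at the single point $u''=u'-u$. Uniqueness therefore shows that every other point of the curve \emph{violates} the Yang--Baxter equation; it cannot ``upgrade this to an equivalence''.

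In fact the converse, read literally, is false. Take ${\sf R}={\sf R}'={\sf R}''$ with, say, $(a,b,c,d)=(1,2,3,4)$. The invariants trivially coincide, yet the equation $ab'c''+dd'b''=ba'c''+cc'b''$ reduces to $b(d^2-c^2)=0$, which the hypotheses exclude. For the same reason your ``main obstacle'' is mis-posed: the six identities do not hold identically on the curve, and the rank condition singles out a point of the curve, not the whole curve $\Gamma''=\Gamma$, $\Delta''=\Delta$.

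What can be proved, and all that the paper's argument delivers, is the following. First, (\ref{RRR1}) implies equality of the invariants. Conversely, if $\Gamma=\Gamma'$ and $\Delta=\Delta'$, there is an ${\sf R}''$, unique up to a factor, satisfying (\ref{RRR1}), and it automatically has the same $\Gamma$ and $\Delta$. In the elliptic parametrization this is the constraint $u'=u+u''$ stated after (\ref{RRR6a}). Your final idea, showing algebraically that the explicit solution of the linear system lies on both quadrics, is the right target for this corrected statement. The paper gets that part for free from the unprimed/doubly-primed symmetry of (\ref{system1}).
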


\noindent
{\it Proof.} We follow original Baxter's proof given in \cite{Baxter}.
A direct calculation shows that equation (\ref{RRR1}) is equivalent
to the following system of 6 equations:
\beq\label{system1}
\left \{ \begin{array}{l}
ac'a'' +d a' d'' = bc' b'' + c a' c'',
\\ \\
ab'c'' +d d' b'' = ba' c'' + c c' b'',
\\ \\
cb'a'' +b d' d'' = ca' b'' + b c' c'',
\\ \\
ad'b'' +d b' c'' = bd' a'' + c b' d'',
\\ \\
aa'd'' +d c' a'' = bb' d'' + c d' a'',
\\ \\
da'a'' +a c' d'' = db' b'' + a d' c''.
\end{array}
\right.
\eeq
These are linear homogeneous equations with respect to
$a'', b'' , c'', d''$. A non-trivial solution exists if
the determinant combined from coefficients of, say, first, third,
fourth and sixth equations vanishes. The calculation shows that,
taking into account the assumptions, vanishing of the determinant
is equivalent to the relation
\beq\label{RRR3}
\frac{cd}{ab}=\frac{c'd'}{a'b'},
\eeq
i.e., $\Gamma =\Gamma '$. Imposing this condition, we can resolve
the equations with respect to $a'', b'', c'', d''$ up to a common
multiplier:
$$
\begin{array}{l}
\displaystyle{
a''= \frac{a}{c}(cc' -dd')(b^2 c'{}^{2} -c^2 a'{}^{2})},
\\ \\
\displaystyle{
b''= \frac{b}{d}(dc' -cd')(a^2 c'{}^{2} -d^2 a'{}^{2})},
\\ \\
\displaystyle{
c''= \frac{c}{a}(bb' -aa')(a^2 c'{}^{2} -d^2 a'{}^{2})},
\\ \\
\displaystyle{
d''= \frac{d}{b}(ab' -ba')(b^2 c'{}^{2} -c^2 a'{}^{2})}.
\end{array}
$$
Plugging this to the second equation in (\ref{system1}), we get,
taking into account (\ref{RRR3}):
\beq\label{RRR4}
\frac{a^2 +b^2 -c^2 -d^2}{ab}=
\frac{a'{}^{2} +b'{}^{2} -c'{}^{2} -d'{}^{2}}{a' b'},
\eeq
i.e., $\Delta =\Delta '$. 

Next, the system of equations (\ref{system1}) remains the same
if one interchanges the variables with prime and two primes, hence 
$\Gamma =\Gamma' =\Gamma ''$, $\Delta =\Delta '=\Delta''$.
\square

Assuming that $d\neq 0$,
we can put $d=abc$, then (\ref{RRR3}) yields
$\Gamma =c^2$, i.e., we should assume that the matrix
elements $c$ are the same for the three matrices: $c=c'=c''$.
Then from the second equation in (\ref{RRR2}) we have:
$$
\Delta =\frac{a^2 +b^2 -c^2 (1+a^2 b^2)}{ab},
$$
or
\beq\label{curve1}
\Gamma^2 (1+a^2 b^2)- (a^2 +b^2 )+\Delta ab =0.
\eeq
For $\Gamma \neq 0$ and $\Delta \neq \pm 2\Gamma^2 \pm 2$ 
this equation defines a smooth elliptic curve ${\cal E}$: 
a point $P=(a,b)\in {\cal E}$, if $a,b$ satisfy equation (\ref{curve1}).
Uniformization of this curve by means of elliptic functions
gives the well known Baxter's $R$-matrix with a spectral parameter
on the elliptic curve. The spectral parameter is just the uniformizing
parameter on the curve.

Namely, in terms of the elliptic function\footnote{This function
is the slightly modified ``elliptic sinus'' function, see Appendix B.} 
\beq\label{ell1}
{\sf sn}(u)={\sf sn}(u|\tau )=
\frac{\theta_1(u|\tau )}{\theta_4(u|\tau )},
\eeq
where $\theta_i(u|\tau )$ are Jacobi's theta-functions with the
modular parameter $\tau$ the uniformization is as follows:
\beq\label{ell2}
a=a(u)={\sf sn}(u +\eta ), \quad b=b(u)={\sf sn}(u),
\eeq
with
\beq\label{ell2a}
\Gamma ={\sf sn}(\eta ), \quad 
\Delta =2\, \frac{ \theta_4^2(0)\, 
\theta_2 (\eta )\theta_3 (\eta )}{\theta_2 (0 )\theta_3 (0 )\,
\theta_4^2 (\eta )},
\eeq
where $\eta$ is a parameter (which can be regarded
as a fixed marked point on the curve, namely
the point $P_0=(\Gamma , 0)\in {\cal E}$), i.e., from
the two independent parameters $\Gamma , \Delta$ 
we have passed to $\tau , \eta$. The third independent parameter
$u$ is the spectral parameter.
In this notation, the matrix elements of the $R$-matrix (\ref{R1})
which is now regarded as a matrix-valued function of the spectral
parameter, read:
\beq\label{RRR6a}
a={\sf sn}(u +\eta ), \quad b={\sf sn}(u),
\quad c={\sf sn}(\eta ), \quad d={\sf sn}(\eta ){\sf sn}(u)
{\sf sn}(u +\eta ).
\eeq
This is the well known elliptic parametrization of Baxter's 
$R$-matrix\footnote{This $R$-matrix is also known 
in other normalizations, which differ by a common factor. 
For example, in a number of problems, normalization is convenient, in which the Boltzmann weights are entire functions of $u$.}.
The parameters $\tau , \eta$ are the same for the three $R$-matrices
in (\ref{RRR1}) while the spectral parameters are different.
A simple argument shows that they are connected as 
$u'=u + u''$.
In this parametrization equation (\ref{RRR1}) acquires the
form of the Yang-Baxter equation for the $R$-matrix depending
on the spectral parameter:
\beq\label{RRR6}
{\sf R}_{12}(u_1-u_2) {\sf R}_{13}(u_1-u_3) {\sf R}_{23}(u_2-u_3)=
{\sf R}_{23}(u_2-u_3) {\sf R}_{13}(u_1-u_3) {\sf R}_{12}(u_1-u_2).
\eeq
The substitution (\ref{RRR6a}) converts it into identity.

It remains to comment on the two limiting cases of Baxter's $R$-matrix
when elliptic functions degenerate into trigonometric (or hyperbolic)
ones. They are:
\beq\label{two}
\begin{array}{l}
\mbox{I)} \; \tau \to +i0 ,
\\ \\
\mbox{II)} \; \tau \to +i\infty .
\end{array}
\eeq
In case I) the procedure is more or less straightforward after the
modular transformation $\tau \to \tau '=-1/\tau$ and subsequent limit
$\tau '\to +i\infty$. The Boltzmann weights $a,b,c,d$ remain 
nonzero but they are parametrized by trigonometric
(or hyperbolic) functions instead of elliptic ones. 
Case II) corresponds to the 6-vertex model ($d=0$).  
It seems to be easier since the limit $\tau \to +i\infty$
does not require any modular transformation, but definition of
the spectral curve in this limit requires some additional care.
The details will be presented in the forthcoming paper 
\cite{SZ-to-appear}.

\subsection{Elliptic parametrization of the dispersionless
mDKP hierarchy}
\label{section:elliptic1}

The key step in the proof of Theorem \ref{theorem:equivalence}
is a special change of variables motivated by uniformization
of the elliptic curve (\ref{bil8a}). Namely, instead of 
the original dynamical variables which are combined into
the generating functions $w(z), \, v(z)$, we introduce another
set of dependent variables $c_k=c_k(t_0, \bar t_0, {\bf t})$
which are coefficients in the series
\beq\label{u(z)}
u(z)=c_1 z^{-1} +\sum_{k\geq 2}c_k z^{-k},
\eeq
where it is assumed that $z$ belongs to a neighborhood of $\infty$.
The function $u(z)$ is normalized so that $u(\infty )=0$.
Then, setting
\beq\label{unif1}
w(z)={\sf sn}(u(z)), \qquad 
v(z)={\sf sn}(u(z)+\eta )
\eeq
and
\beq\label{unif2}
R={\sf sn}(\eta ), \qquad
V=2\frac{\theta_4^2(0|\tau )
\theta_2(\eta |\tau )\theta_3(\eta|\tau  )}{\theta_2(0|\tau )
\theta_3(0|\tau )\theta_4^2(\eta |\tau )}=\frac{2\, {\sf cn}(\eta ) \,
{\sf dn}(\eta )}{{\sf cn} (0) \,
{\sf dn}(0)},
\eeq
where ${\sf sn (u)}$ and the other 
elliptic functions are defined in (\ref{uni3a}), we see that 
equation (\ref{bil8a}) is satisfied identically for any value
of the modular parameter $\tau$ (which is not indicated
explicitly in (\ref{unif1})). In fact this is the same 
uniformization as given above for Baxter's $R$-matrix
(see (\ref{ell2}), (\ref{ell2a})):
one should identify $w(u)$ with $b(u)$,
$R$ with $\Gamma$, $V$ with $\Delta$.
In (\ref{unif1}), there is a new times-dependent variable $\eta$
which represents a marked point on the elliptic curve.
The modular parameter $\tau$ is a dynamical variable, too.
It is determined by the equation
\beq\label{proof3}
R^2 +R^{-2} -\frac{V^2}{4R^2}=
\Bigl (\frac{{\sf cn}(0)}{{\sf dn}(0)}\Bigr )^2 +
\Bigl (\frac{{\sf dn}(0)}{{\sf cn}(0)}\Bigr )^{2}
\eeq
(see Appendix B for details). The main advantage of this change
of variables is that instead of two functions $v(z), w(z)$ constrained
by the equation of the curve we now deal with only one, $u(z)$.

\begin{proposition}\label{proposition:fg}
In the elliptic parametrization, the functions $g(z, \zeta )$ and
$f(z, \zeta )$ are given by
\beq\label{fg}
\begin{array}{l}
g(z, \zeta )={\sf sn} \Bigl (u(z)-u(\zeta )\Bigr ), 
\\ \\
f(z, \zeta )={\sf sn} \Bigl (u(z)-u(\zeta )+\eta \Bigr ).
\end{array}
\eeq
\end{proposition}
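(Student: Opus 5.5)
We substitute the uniformization (\ref{unif1}), (\ref{unif2}) into the algebraic expressions for $g$ and $f$ derived earlier, and recognize the results as addition theorems for ${\sf sn}$. Start with $g$. Use formula (\ref{g1}):
$$
g(z,\zeta)=\frac{w(z)v(\zeta)-w(\zeta)v(z)}{R\bigl(1-w(z)w(\zeta)v(z)v(\zeta)\bigr)},
$$
and set $x=u(z)$, $y=u(\zeta)$. It then suffices to prove the identity
\begin{equation*}
{\sf sn}(x-y)\,{\sf sn}(\eta)\bigl(1-{\sf sn}(x){\sf sn}(y){\sf sn}(x+\eta){\sf sn}(y+\eta)\bigr)={\sf sn}(x){\sf sn}(y+\eta)-{\sf sn}(y){\sf sn}(x+\eta)
\end{equation*}
for the function ${\sf sn}=\theta_1/\theta_4$.

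This identity is exactly the content of Baxter's Yang--Baxter solution in the parametrization (\ref{RRR6a}). Take the first equation of (\ref{system1}) and specialize the weights ${\sf R}'$, ${\sf R}''$ so that one of them has $b=0$, i.e.\ $u=0$. Equivalently, write (\ref{g1}) as relation (\ref{bil5a}) with $a_3\to\infty$, which is the elliptic three-term identity. Concretely, I would prove the identity by the standard Liouville argument in $x$ with $y,\eta$ fixed. Both sides, after multiplying by the theta denominators, are theta-function combinations with the same quasi-periodicity. The poles of ${\sf sn}(x)$, ${\sf sn}(x+\eta)$ and ${\sf sn}(x-y)$ all cancel in the difference, since residues can be compared via $\theta_4$ zeros. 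The difference also vanishes at $x=y$, because the left side is zero there and the right side is antisymmetric. An elliptic function with no poles is constant, and that constant is zero.

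Alternatively, and more cleanly, I would use the known addition formula
$$
{\sf sn}(x){\sf sn}(y+\eta)-{\sf sn}(y){\sf sn}(x+\eta)={\sf sn}(x-y){\sf sn}(\eta)\bigl(1-\ldots\bigr),
$$
which follows from the Jacobi addition theorem ${\sf sn}(x\pm y)=({\sf sn}x\,{\sf cn}y\,{\sf dn}y\pm\ldots)/(1-k^2{\sf sn}^2x\,{\sf sn}^2y)$ once the rescaling between the paper's ${\sf sn}$ and the standard Jacobi one (Appendix B) is taken into account. The $k$-factors are absorbed because the paper's ${\sf sn}$ is normalized so that the curve has the form (\ref{bil8a}).

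For $f$, we could use the same strategy on (\ref{f1}). A shortcut is available, however. Once $g={\sf sn}(x-y)$ is known, the identity (\ref{bil8b}) with $V$ as in (\ref{unif2}) says that $(f,g)$ lies on the curve. The point $({\sf sn}(x-y+\eta),{\sf sn}(x-y))$ lies on it by the uniformization. For fixed $g$ the curve equation is quadratic in $f$, so the second root is $-{\sf sn}(x-y-\eta)$ up to a symmetry. We select the branch by continuity at $\zeta=\infty$ ($y=0$), where $f(z,\infty)=v(z)={\sf sn}(x+\eta)$, and check the value $f(z,z)=R={\sf sn}(\eta)$ from (\ref{f2}). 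To be safe, I would rather verify (\ref{f1}) directly: substitute $g={\sf sn}(x-y)$, $w$, $v$, $R$, and reduce to the addition identity
$$
{\sf sn}(x-y+\eta)\bigl(v(x)w(y)+Rg\bigr)=v(y)\bigl(Rw(y)+v(x)g\bigr),
$$
which is again a three-point theta identity provable by comparing poles and zeros in $x$.

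The main obstacle will be establishing these addition identities for the specific normalization ${\sf sn}=\theta_1/\theta_4$, which is not the standard Jacobi normalization. Everything hinges on a clean version of the addition theorem in this normalization, and I would derive that first from the Riemann/Fay-type quadratic theta identity. After that, both formulas in (\ref{fg}) are direct consequences.
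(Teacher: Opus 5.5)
Your proposal is correct and is essentially the paper's proof: you substitute (\ref{unif1}), (\ref{unif2}) into (\ref{g1}) and (\ref{f1}), reduce each to a three-term identity for ${\sf sn}=\theta_1/\theta_4$, and your Liouville argument does establish those identities (residues cancel at the $\theta_4$-zeros by ${\sf sn}(u+\frac{\tau}{2})=1/{\sf sn}(u)$, and the constant vanishes at $x=y$). Keep the direct verification of (\ref{f1}) rather than the curve-equation shortcut: there you misstated the second root, which is ${\sf sn}(x-y-\eta)$, i.e.\ $-f(\zeta,z)$, and the shortcut relies on (\ref{bil8b}), which the paper derives on the Yang--Baxter side rather than from the hierarchy.
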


\noindent
{\it Proof.}
The assertion follows (after applying some well known
identities for elliptic functions) from substitution of 
(\ref{unif1}), (\ref{unif2})
into (\ref{g1}) and (\ref{f1}).
\square

\begin{remark}
Comparing with (\ref{ell2}), we see that $f$, $g$ and $R$ are 
analogous to the Boltzmann weights $a$, $b$ and $c$ respectively.
\end{remark}

Proposition \ref{proposition:fg} allows us to prove Theorem
\ref{theorem:equivalence}. Indeed, after the substitutions
(\ref{fg}), (\ref{unif1}) the general Hirota-Miwa 
equation (\ref{bil7a}) 
acquires the form
\beq\label{bil7b}
\begin{array}{l}
\displaystyle{
\phantom{+}\sum_{s = 1}^{P^{+}}
\left(
\prod_{i = 1, \neq s}^{P^{+}} {\sf sn}^{-1}(u_s, u_i)
\right )
\left(
\prod_{k=1}^{P^{-}} {\sf sn}(a_s, v_k)
\right) {\sf sn}(u_s+\eta )}
\\ \\ 
\phantom{aaaaaaa}+\,
\displaystyle{
\sum_{s = 1}^{P^{-}}
\left(
\prod_{i = 1, \neq s}^{P^{-}} 
{\sf sn}^{-1}(v_s, v_i)
\right )
\left(
\prod_{k = 1}^{P^{+}} {\sf sn}(v_s, u_k)
\right){\sf sn}^{-1}(v_s+\eta )}
\\ \\
\phantom{aaaaaaaaaaaaaa}=\,
\displaystyle{
\prod_{i=1}^{P^{+}}
{\sf sn}(u_i +\eta )
\prod_{k=1}^{P^{-}}
{\sf sn}^{-1}(v_k+\eta ),
}
\end{array}
\eeq
where $u_i=u(a_i)$, $v_k=u(b_k)$. In the same way as it was done
in \cite{SZ26a}, one can prove that this equation is satisfied
identically for all the variables $u_i, v_k$. The proof is outlined
in Appendix A.

Therefore, we conclude that the whole hierarchy is equivalent
to the equations
\beq\label{hi1}
\left \{
\begin{array}{rcl}
(z^{-1}-\zeta^{-1})e^{\nabla (z)\nabla (\zeta )F}&=&
{\sf sn} \Bigl (u(z)-u(\zeta )\Bigr )
\\ && \\
e^{\nabla (z)\bar \p_0 F}& = & {\sf sn}\Bigl (u(z)+\eta \Bigr ).
\end{array} \right.
\eeq
Note that the elliptic parametrization 
of the function $w(z)$ (\ref{unif1}) is already contained
in the first of these equations in the limit $\zeta \to \infty$.

For completeness, it remains to extend 
the elliptic parametrization to the 
curve (\ref{bil8c}).

\begin{proposition}
The substitutions
\beq\label{sub1}
\begin{array}{l}
\displaystyle{
p(z)=\gamma  \, 
\frac{{\sf cn}(u(z))\,{\sf dn}(u(z))}{{\sf sn}(u(z))},}
\\ \\
R_0^2 = \gamma^2 {\sf cn}^2(0)\, {\sf dn}^2(0),
\\ \\
V_0 =\gamma^2 \Bigl ({\sf cn}^4(0)+ {\sf dn}^4(0)\Bigr ),
\end{array}
\eeq
where $\gamma =\gamma (t_0, \bar t_0, {\bf t})$ is a
dynamical variable, convert equation
(\ref{bil8c}) into identity.
\end{proposition}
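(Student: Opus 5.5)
The plan is to substitute (\ref{sub1}) into (\ref{bil8c}), write $w(z)={\sf sn}(u(z))$ as in (\ref{unif1}), and reduce everything to a single polynomial identity in $s={\sf sn}(u)$, where $u=u(z)$. Multiplying (\ref{bil8c}) by $s^2$ and inserting $p=\gamma\,{\sf cn}(u){\sf dn}(u)/s$, $R_0^2=\gamma^2{\sf cn}^2(0){\sf dn}^2(0)$ and $V_0=\gamma^2({\sf cn}^4(0)+{\sf dn}^4(0))$, the overall factor $\gamma^2$ cancels. What remains to prove is
$$
{\sf cn}^2(0)\,{\sf dn}^2(0)\,\bigl(s^4+1\bigr)={\sf cn}^2(u)\,{\sf dn}^2(u)+\bigl({\sf cn}^4(0)+{\sf dn}^4(0)\bigr)s^2 .
$$
This is an identity purely in $u$ and $\tau$; it does not involve $\eta$ or $\gamma$.

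The key step is to express ${\sf cn}^2(u)$ and ${\sf dn}^2(u)$ through ${\sf sn}^2(u)$ in the normalization of (\ref{ell1}). I take ${\sf cn}(u)=\theta_2(u)/\theta_4(u)$ and ${\sf dn}(u)=\theta_3(u)/\theta_4(u)$, as in Appendix B (\ref{uni3a}). I then use the standard quadratic theta relations
$$
\theta_2^2(u)\theta_4^2(0)=\theta_4^2(u)\theta_2^2(0)-\theta_1^2(u)\theta_3^2(0),\qquad
\theta_3^2(u)\theta_4^2(0)=\theta_4^2(u)\theta_3^2(0)-\theta_1^2(u)\theta_2^2(0).
$$
Dividing by $\theta_4^2(u)\theta_4^2(0)$ gives the modified Pythagorean relations
$$
{\sf cn}^2(u)={\sf cn}^2(0)-{\sf dn}^2(0)\,s^2,\qquad {\sf dn}^2(u)={\sf dn}^2(0)-{\sf cn}^2(0)\,s^2 .
$$
Multiplying these two relations produces exactly ${\sf cn}^2(0){\sf dn}^2(0)(1+s^4)-({\sf cn}^4(0)+{\sf dn}^4(0))s^2$, which is the required identity. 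So (\ref{bil8c}) becomes an identity in $u(z)$, for every $z$.

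Finally, I will check consistency with the previously fixed data, so that the substitution does not overdetermine anything. By the Remark after (\ref{bil8c}), $V_0/R_0^2=R^2+R^{-2}-V^2/(4R^2)$. In the parametrization (\ref{sub1}) this ratio equals $({\sf cn}^4(0)+{\sf dn}^4(0))/({\sf cn}^2(0){\sf dn}^2(0))$, which is precisely the right-hand side of (\ref{proof3}). Hence the modular parameter $\tau$ fixed by (\ref{proof3}) is the same one, and the single new dynamical variable $\gamma$ is determined from $R_0^2=4R^2v_0^2/V^2$ (up to sign). I expect the only real obstacle to be bookkeeping: the paper's ${\sf sn},{\sf cn},{\sf dn}$ are theta-quotients with nonstandard normalization, for instance ${\sf cn}(0)\neq1$. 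The theta relations must therefore be applied with exactly the right $\theta_i(0)$ factors. I would double-check them by expanding near $u=0$ and comparing with the derivative relation ${\sf sn}'(0)=\pi\theta_2(0)\theta_3(0)/\theta_4(0)$.
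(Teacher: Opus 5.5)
Your proof is correct, but it is not the route the paper's proof takes.

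The paper does not check (\ref{bil8c}) directly. It \emph{derives} the formula for $p(z)$ from the elliptic parametrization (\ref{fg}) of $f$ and $g$:
it writes $p(z)=v_0\,(f(\infty,z)+v(z))/(f(\infty,z)-v(z))$ using (\ref{f2a}), and computes $v_0$ from (\ref{v0}) and (\ref{ini5}).
It then reduces $({\sf sn}(u+\eta)-{\sf sn}(u-\eta))/({\sf sn}(u+\eta)+{\sf sn}(u-\eta))$ to ${\sf cn}(u){\sf dn}(u){\sf sn}(\eta)/({\sf cn}(\eta){\sf dn}(\eta){\sf sn}(u))$ by theta identities, passing through modular parameter $\tau/2$.
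This gives the first line of (\ref{sub1}) with the definite value $\gamma=\pi\theta_4^2(0)c_1$; $R_0^2$ and $V_0$ are supplied implicitly by (\ref{R0b}).
The direct check is only mentioned there. Appendix B sketches it as identity (\ref{B4}), proved by a Liouville-type argument: an even elliptic function whose possible poles at $0$ and $\tau/2$ cancel, with the constant fixed at $u=\frac{1}{2}$.

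Your reduced identity is exactly (\ref{B4}) multiplied by ${\sf sn}^2(u)/\theta_4^4(0)$. You prove it more elementarily, by factoring it into ${\sf cn}^2(u)={\sf cn}^2(0)-{\sf dn}^2(0){\sf sn}^2(u)$ and ${\sf dn}^2(u)={\sf dn}^2(0)-{\sf cn}^2(0){\sf sn}^2(u)$. Both of these follow from the paper's own list (\ref{identities}) by setting $v=\frac{1}{2}$ and using (\ref{p}), (\ref{hp}), so no outside input is needed.

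What each route buys:
your argument is a short, self-contained algebraic check, valid for any $\gamma$.
The paper's derivation gives the stronger fact that the substituted $p(z)$ is the hierarchy's own $p(z)=z-\nabla(z)\p_{t_1}F$, with a specific $\gamma$. Equation (\ref{bil8c}) alone cannot give this, since it only sees $p^2$; that is why your consistency paragraph fixes $\gamma$ only up to sign.
If you also want the sign, compare leading terms. On one side $p(z)=z+O(1)$. On the other, $u=c_1z^{-1}+\ldots$ gives $\gamma\,{\sf cn}(u){\sf dn}(u)/{\sf sn}(u)=\gamma z/(\pi\theta_4^2(0)c_1)+O(1)$. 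Hence $\gamma=\pi\theta_4^2(0)c_1$, as in the paper.

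One small slip in your optional sanity check: ${\sf sn}'(0)=\theta_1'(0)/\theta_4(0)=\pi\theta_2(0)\theta_3(0)$ (compare (\ref{ini5}) at $u=0$), not $\pi\theta_2(0)\theta_3(0)/\theta_4(0)$. It plays no role in the argument.
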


\noindent
{\it Proof.} This identity can be proved directly but it is
probably more instructive to derive the parametrization
(\ref{sub1}) from the one for the functions $f$, $g$ (\ref{fg})
using (\ref{f2a}) and (\ref{R0b}). 
From (\ref{f2a}) we have:
$$
p(z)=v_0 \, \frac{f(\infty , z)+v(z)}{f(\infty , z)-v(z)}.
$$
From the definition of $v_0$ (\ref{v0}) we get:
$$
v_0=-\p_{\zeta^{-1}}\log {\sf sn}\Bigl (u(\zeta )+\eta \Bigr )
\Bigr |_{\zeta^{-1}=0}=
-\p_{\zeta^{-1}}u(\zeta )
\Bigr |_{\zeta^{-1}=0} \, \frac{{\sf sn}'(\eta )}{{\sf sn}(\eta )}
$$
$$
=\, -\pi c_1 \theta^2_4(0) \, 
\frac{{\sf cn} (\eta )\, {\sf dn} (\eta )}{{\sf sn} (\eta )},
$$
where we have used equation (\ref{ini5}) from Appendix B and
$c_1$ is the coefficient at the first term in (\ref{u(z)}).
Next, we have:
$$
\begin{array}{lll}
\displaystyle{
- \frac{f(\infty , z)+v(z)}{f(\infty , z)-v(z)} }& =&
\displaystyle{
\frac{{\sf sn}(u+\eta )-
{\sf sn}(u-\eta )}{{\sf sn}(u+\eta )+{\sf sn}(u-\eta )}}
\\ && \\
&=&\displaystyle{
\frac{\theta_1(u(z)+\eta |\tau )\theta_4(u(z)-\eta |\tau )-
\theta_1(u(z)-\eta |\tau )
\theta_4(u(z)+\eta |\tau )}{\theta_1(u(z)+\eta |\tau )
\theta_4(u(z)-\eta |\tau )+
\theta_1(u(z)-\eta |\tau )\theta_4(u(z)+\eta |\tau )}}
\\ && \\
&=&\displaystyle{\frac{\theta_2 (u(z)|\frac{\tau}{2})\,
\theta_1 (\eta |\frac{\tau}{2})}{\theta_1 (u(z)|\frac{\tau}{2})\,
\theta_2 (\eta |\frac{\tau}{2})}}
\\ && \\
&=&\displaystyle{\frac{\theta_2 (u(z)|\tau )\, \theta_2 (u(z)|\tau )\,
\theta_1 (\eta |\tau )\, \theta_4 (\eta |\tau )}{\theta_1 (u(z)|\tau )\,
\theta_4 (u(z)|\tau )\,
\theta_2 (\eta |\tau )\, \theta_3 (\eta |\tau )}=
\frac{{\sf cn} (u(z))\, {\sf dn} 
(u(z))\, {\sf sn} (\eta )}{{\sf cn} (\eta )\, {\sf dn} 
(\eta )\, {\sf sn} (u(z))}}.
\end{array}
$$
Combining this with the expression for $v_0$ obtained above,
we arrive at (\ref{sub1}) with
$\gamma =\pi \theta_4^2(0)c_1$.
\square

\section{Conclusion and outlook}

The main result of this work is  
establishing a close relationship between some 
integrable hierarchies of nonlinear differential equations 
with  zero dispersion and the Yang-Baxter equation 
for quantum R-matrices. More precisely, the main statement
is as follows. Consider the $4\times 4$ Baxter
$R$-matrices which are matrices of Boltzmann's weights 
$a,b,c,d$ for the 8-vertex or 6-vertex models and 
let's make them functions of the hierarchical times $t_k$ 
by expressing them in terms of mixed second partial derivatives 
of the $F$-function with respect to the times. Then the
Yang-Baxter equation for such $R$-matrix is equivalent 
to some integrable hierarchy in the limit of zero dispersion
(in general, if $d\neq 0$, to the dispersionless modified 
DKP hierarchy).

Our proof of the equivalence is based on identification
of the elliptic curves that are 
part of the algebraic structure of the two integrable systems.
On the Yang-Baxter side, it is the spectral curve on which the
spectral parameter lives. On the side of dispersionless 
hierarchies, it is the dynamical curve. However, there is a
significant difference between them. The modular parameter
$\tau$ of the spectral curve is a fixed constant, which is a
parameter of the model, while the dynamical curve
``breathes'', that is, its modular parameter 
is generally a time-dependent dynamical variable.
Translating this into the language of vertex models, 
their physical quantities  (say, partition functions) 
can be made time-dependent, according to a solution of the 
dispersionless hierarchy. It is then natural to ask: whether 
the partition functions (or some other quantities specific to
vertex models or quantum spin chains), constructed in this way
as functions of the times, satisfy some nice 
differential equations.
This is an interesting problem for further research.

In this paper we have considered only the general case, when
the elliptic curve is smooth. 
Two possible trigonometric degenerations of the elliptic 
Baxter $R$-matrix deserve a special consideration. Presumably,
they correspond to the dispersionless modified large BKP and mKP
hierarchies. Also,
the Baxter $R$-matrix admits further degeneration, namely, to
the Yang $R$-matrix ${\sf R}(u)=u{\sf I}+{\sf P}$ with
rational dependence on the spectral parameter. The 
natural question arises whether any 
integrable hierarchies can be associated with rational $R$-matrices.
A similar question can be asked regarding other possible
solutions to the Yang-Baxter equation, not necessarily of the form
(\ref{R10a}), for example, the non-symmetric 
$R$-matrices corresponding to 
the 7-vertex or 11-vertex models (see, e.g., \cite{AHZ97,LOZ14} and
references therein). We hope to discuss all these 
degenerate cases in a
separate publication \cite{SZ-to-appear}.

Another possible subject for future research is an extension
of the connection with $R$-matrices to solutions
to the Yang-Baxter equation among matrices of a higher size,
for example, to the Belavin $R$-matrices of size $n^2\times n^2$ 
for which the spectral curve is still elliptic \cite{Belavin81}.
The chiral Potts model \cite{chiralPotts}, 
where the spectral curve is a curve
of higher genus, may be also of interest in this respect.
In this regard, it would also be interesting to understand 
what a dispersionless hierarchy turns into at the classical limit 
of the quantum $R$-matrix, that is, whether (and in what form) any
relationship of integrable hierarchies with the classical version 
of the Yang-Baxter equation for the classical $r$-matrix 
will be preserved.

From the side of integrable hierarchies, a possible generalization
might be an extension of this approach to the full family
of multi-component dispersionless hierarchies. 
The goal here (however, hardly achievable in such a formulation) 
could be a recipe for how to assign to each dispersionless integrable 
hierarchy a certain quantum $R$-matrix 
satisfying the Yang-Baxter equation. 

Finally, the most ambitious related goal 
(again, hardly achievable in the short term) is 
to extend the relationship found in this work 
to the ``full'' hierarchies (KP, DKP, BKP) 
with nonzero dispersion. Currently, there are not 
even any meaningful hypotheses about how and with what 
it is necessary to replace quantum $R$-matrices 
and the Yang-Baxter equation.

\section*{Appendix A: Theta-functions and elliptic functions}
\label{appendix:theta}

\addcontentsline{toc}{section}{Appendix A: Theta-functions
and elliptic functions}
\def\theequation{A\arabic{equation}}
\def\theHequation{\theequation}
\setcounter{equation}{0}

\subsection*{Theta-functions}

The Jacobi's theta-functions $\theta_a (u)=
\theta_a (u|\tau )$, $a=1,2,3,4$, are defined by the absolutely 
convergent infinite sums as follows:
\beq\label{Bp1}
\begin{array}{l}
\theta _1(u)=-\displaystyle{\sum _{k\in \z}}
\exp \left (
\pi i \tau (k+{\textstyle\frac{1}{2}})^2 +2\pi i
(u+{\textstyle\frac{1}{2}})(k+{\textstyle\frac{1}{2}})\right ),
\\ \\
\theta _2(u)=\displaystyle{\sum _{k\in \z}}
\exp \left (
\pi i \tau (k+{\textstyle\frac{1}{2}})^2 +2\pi i
u(k+{\textstyle\frac{1}{2}})\right ),
\\ \\
\theta _3(u)=\displaystyle{\sum _{k\in \z}}
\exp \left (
\pi i \tau k^2 +2\pi i u k \right ),
\\ \\
\theta _4(u)=\displaystyle{\sum _{k\in \z}}
\exp \left (
\pi i \tau k^2 +2\pi i
(u+{\textstyle\frac{1}{2}})k\right ),
\end{array}
\eeq where $\tau$ is a complex parameter (the modular parameter) 
such that $\Im \, \tau >0$. The function 
$\theta_1(u)$ is odd, the other three functions are even.
The infinite product representation for the theta-functions reads: 
\beq\label{infprod}
\begin{array}{ll}
\theta_1(u|\tau)&=\displaystyle{2q^{\frac{1}{4}}\sin\pi u
\prod_{n=1}^\infty(1-q^{2n})(1-q^{2n}e^{2\pi i u})(1-q^{2n}e^{-2\pi i u}),}
\\ & \\
\theta_2(u|\tau)&=\displaystyle{2q^{\frac{1}{4}}\cos\pi u
\prod_{n=1}^\infty(1-q^{2n})(1+q^{2n}e^{2\pi i u})(1+q^{2n}
e^{-2\pi i u}),}
\\ & \\
\theta_3(u|\tau)&=\displaystyle{
\prod_{n=1}^\infty(1-q^{2n})(1+q^{2n-1}e^{2\pi i u})(1+q^{2n-1}
e^{-2\pi i u}),}
\\ & \\
\theta_4(u|\tau)&=\displaystyle{
\prod_{n=1}^\infty(1-q^{2n})(1-q^{2n-1}e^{2\pi i u})
(1-q^{2n-1}e^{-2\pi i u})}.
\end{array}
\eeq
where $q=e^{\pi i \tau}$.
In the limit $\tau \to +i\infty$ they are:
$\theta_1(u|\tau )=2q^{\frac{1}{4}}\sin \pi u + O(q^{\frac{9}{4}})$,
$\theta_2(u|\tau )=2q^{\frac{1}{4}}\cos \pi u + O(q^{\frac{9}{4}})$,
$\theta_3(u|\tau )=1+O(q)$, $\theta_4(u|\tau )=1+O(q)$.
The theta-functions satisfy a lot of nontrivial identities.

The main transformation properties of the theta functions are
listed below.

\noindent
\underline{Shifts by (quasi)periods}:
\beq\label{p}
\begin{array}{l}
\theta_1(u+1)=-\theta_1(u),\\ \\
\theta_2(u+1)=-\theta_2(u),\\ \\
\theta_3(u+1)=\theta_3(u),\\ \\
\theta_4(u+1)=\theta_4(u).
\end{array}\hspace{2cm}
\begin{array}{l}
\theta_1(u+\tau)=-e^{-\pi i(2u+\tau)}\theta_1(u),\\ \\
\theta_2(u+\tau)=e^{-\pi i(2u+\tau)}\theta_2(u),\\ \\
\theta_3(u+\tau)=e^{-\pi i(2u+\tau)}\theta_3(u),\\ \\
\theta_4(u+\tau)=-e^{-\pi i(2u+\tau)}\theta_4(u).
\end{array}\vspace{0.3cm}
\eeq
\underline{Shifts by half-periods}:
\beq\label{hp}
\begin{array}{l}
\theta_1(u+{\textstyle\frac{1}{2}})=\theta_2(u),\\ \\
\theta_2(u+{\textstyle\frac{1}{2}})=-\theta_1(u),\\ \\
\theta_3(u+{\textstyle\frac{1}{2}})=\theta_4(u),\\ \\
\theta_4(u+{\textstyle\frac{1}{2}})=\theta_3(u).
\end{array}\hspace{2cm}
\begin{array}{l}
\theta_1(u+{\textstyle\frac{\tau}{2}})=
ie^{-\pi i(u+\tau/4)}\theta_4(u),\\ \\
\theta_2(u+{\textstyle\frac{\tau}{2}})=
e^{-\pi i(u+\tau/4)}\theta_3(u),\\ \\
\theta_3(u+{\textstyle\frac{\tau}{2}})=
e^{-\pi i(u+\tau/4)}\theta_2(u),\\ \\
\theta_4(u+{\textstyle\frac{\tau}{2}})=i
e^{-\pi i(u+\tau/4)}\theta_1(u).
\end{array}\vspace{0.3cm}
\eeq
\underline{The modular transformation $\tau \to -1/\tau$}:
\beq\label{mod2d}
\begin{array}{ll}
\theta_1\left (u/\tau|-1/\tau\right)&=-i\sqrt{-i\tau}\,
e^{\pi iu^2/\tau}\theta_1(u|\tau),\\ & \\
\theta_2\left (u/\tau|-1/\tau\right)&=\sqrt{-i\tau}\,
e^{\pi iu^2/\tau}\theta_4(u|\tau), \\ & \\
\theta_3\left (u/\tau|-1/\tau\right)&=\sqrt{-i\tau}\,
e^{\pi iu^2/\tau}\theta_3(u|\tau),\\ & \\
\theta_4\left (u/\tau|-1/\tau\right)&=\sqrt{-i\tau}\,
e^{\pi iu^2/\tau}\theta_2(u|\tau).
\end{array}
\eeq
(The branch of the square root here
is such that $\Re \sqrt{-i\tau}>0$.)

The theta-functions satisfy a lot of nontrivial identities. 
The ones that are often used in the present work are listed below:
\beq\label{identities}
\begin{array}{l}
\theta_4^2(u)\theta_4^2(v)-\theta_1^2(u)\theta_1^2(v)
=\theta_4^2(0)\theta_4(u+v)\theta_4(u-v),
\\ \\
\theta_1^2(u)\theta_4^2(v)-\theta_4^2(u)\theta_1^2(v)
=\theta_4^2(0)\theta_1(u+v)\theta_1(u-v),
\\ \\
\theta_1'(0)=\pi \theta_2(0)\theta_3(0)\theta_4(0).
\end{array}
\eeq
All theta-functions here have the same modular parameter $\tau$.
A more complete list of identities can be found in \cite{KZ15}.

\subsection*{Elliptic functions}

Here we give the necessary information on the elliptic functions
in a very brief form. For the general theory of elliptic 
functions and theta-functions see
\cite{WW,Akhiezer,Takebe-book}.

Elliptic functions are double-periodic functions 
in the complex plane. By the Liouville theorem, non-constant
elliptic functions must necessarily have 
singularities in the fundamental domain. In the simplest case, 
such functions have two simple poles with opposite residues. 
They can be explicitly represented as ratios of the theta-functions.
The standard such functions are ${\rm sn}(w)$
(the ``elliptic sinus''), ${\rm cn}(w)$
(the ``elliptic cosinus'') and ${\rm dn}(w)$.
They are expressed through the 
Jacobi theta-functions 
in the following way:
\beq\label{uni3}
\begin{array}{l}
\displaystyle{
{\rm sn}(w)=\frac{\theta_3(0)\, 
\theta_1(u)}{\theta_2(0)\, \theta_4(u)}},
\quad
\displaystyle{
{\rm cn}(w)=\frac{\theta_4(0)\, 
\theta_2(u)}{\theta_2(0)\, \theta_4(u)}},
\quad
\displaystyle{
{\rm dn}(w)=\frac{\theta_4(0)\, 
\theta_3(u)}{\theta_3(0)\, \theta_4(u)}},
\end{array}
\eeq
where 
\beq\label{uni5}
u=\frac{w}{\pi \, \theta_3^2(0)},
\eeq
where the modular parameter $\tau$ is not shown explicitly.
The number
\beq\label{uni4}
k =\frac{\theta_2^2(0|\tau )}{\theta_3^2(0|\tau )}.
\eeq
is called the elliptic modulus. 

An advantage of the definition (\ref{uni3}) is that in the limit
$\tau \to +i\infty$
the functions
${\rm sn}$ and ${\rm cn}$ become the standard trigonometric 
functions $\sin$ and $\cos$ while the function ${\rm dn}$ converts
into the constant function equal to 1.
However, 
we found it more convenient to use in the main text the following
slightly modified functions:
\beq\label{uni3a}
\begin{array}{l}
\displaystyle{
{\sf sn}(u)=\frac{\theta_1(u)}{\theta_4(u)}},
\qquad
\displaystyle{
{\sf cn}(u)=\frac{\theta_2(u)}{\theta_4(u)}},
\qquad
\displaystyle{
{\sf dn}(u)=\frac{\theta_3(u)}{\theta_4(u)}},
\end{array}
\eeq
then
\beq\label{uni4a}
k =\Bigl (\frac{{\sf cn}(0)}{{\sf dn}(0)}\Bigr )^2.
\eeq
From (\ref{p}) and (\ref{hp}) we find their transformation
properties.

\noindent
\underline{Shifts by periods}:
\beq\label{p1}
\begin{array}{l}
\sn (u+1)=-\sn (u),\\ \\
\cn (u+1)=-\cn (u),\\ \\
\dn (u+1)=\dn (u),\\ \\
\end{array}\hspace{2cm}
\begin{array}{l}
\sn (u+\tau )=\sn (u),\\ \\
\cn (u+\tau )=-\cn (u),\\ \\
\dn (u+\tau )=-\dn (u),\\ \\
\end{array}\vspace{0.3cm}
\eeq
\underline{Shifts by half-periods}:
\beq\label{hp1}
\begin{array}{l}
\displaystyle{
\sn (u+{\textstyle\frac{1}{2}})=\; \frac{\cn (u)}{\dn (u)},}\\ \\
\displaystyle{
\cn (u+{\textstyle\frac{1}{2}})=-\frac{\sn (u)}{\dn (u)},}\\ \\
\displaystyle{
\dn (u+{\textstyle\frac{1}{2}})=\; \frac{1}{\dn (u)},}\\ \\
\end{array}\hspace{2cm}
\begin{array}{l}
\displaystyle{
\sn (u+{\textstyle\frac{\tau }{2}})=\; \frac{1}{\sn (u)},}\\ \\
\displaystyle{
\cn (u+{\textstyle\frac{\tau }{2}})=\; \frac{\dn (u)}{i\, \sn (u)},}\\ \\
\displaystyle{
\dn (u+{\textstyle\frac{\tau }{2}})=\; \frac{\cn (u)}{i\, \sn (u)}.}
\end{array}\vspace{0.3cm}
\eeq
We also need the formula for the $u$-derivative of the function
${\sf sn}(u)$:
\beq\label{ini5}
{\sf sn}'(u)=\pi \theta_4^2(0) \, {\sf cn}(u)\, {\sf dn}(u).
\eeq

At last, let us give some details of the proof of 
the identity (\ref{bil7c}). First of all, we change the variables
as $v_k \to v_k+\frac{\tau}{2}$ and $\eta_k \to \eta_k+\frac{\tau}{2}$
and rewrite the identity in the form
\beq\label{bil7c}
\begin{array}{l}
\displaystyle{
\phantom{+}\sum_{s = 1}^{P^{+}}
\left(
\prod_{i = 1, \neq s}^{P^{+}} {\sf sn}^{-1}(u_s, u_i)
\right )
\left(
\prod_{k=1}^{P^{-}} {\sf sn}^{-1}(a_s, v_k)
\right) {\sf sn}^{-1}(u_s+\eta )}
\\ \\ 
\phantom{aaaaaaa}+\,
\displaystyle{
\sum_{s = 1}^{P^{-}}
\left(
\prod_{i = 1, \neq s}^{P^{-}} 
{\sf sn}^{-1}(v_s, v_i)
\right )
\left(
\prod_{k = 1}^{P^{+}} {\sf sn}^{-1}(v_s, u_k)
\right){\sf sn}^{-1}(v_s+\eta )}
\\ \\
\phantom{aaaaaaaaaaaaaa}-\,
\displaystyle{
\prod_{i=1}^{P^{+}}
{\sf sn}^{-1}(u_i +\eta )
\prod_{k=1}^{P^{-}}
{\sf sn}^{-1}(v_k+\eta )=0.
}
\end{array}
\eeq
It is not difficult to see that the left-hand side
is proportional to sum of residues of the elliptic function
\beq\label{fff}
{\cal F}(u)=\sn^{-1}(u+\eta )\prod_{i=1}^{P^+}
\sn^{-1}(u-u_i) \prod_{k=1}^{P^-}
\sn^{-1}(u-v_k)
\eeq 
with periods $1$ and $\tau$ in the fundamental parallelogram,
and thus is equal to zero. (It is important here that
$P^+ +P^- \in 2\ZZ +1$, so the total number of
multipliers in (\ref{fff}) is even, 
otherwise $1$ is not a period but only
a half-period of this function.)

\section*{Appendix B: Uniformization of algebraic curves}
\label{appendix:uniformization}

\addcontentsline{toc}{section}{Appendix B: Uniformization of  
algebraic curves}
\def\theequation{B\arabic{equation}}
\def\theHequation{\theequation}
\setcounter{equation}{0}

Given a complex algebraic curve $\Gamma$ defined by an equation of the
form
\beq\label{P}
P(x,y)=0, \quad x,y \in \CC ,
\eeq
where $P(x,y)$ is a polynomial, a natural question is how this curve
can be uniformized. The uniformization means that there are two 
functions, $x(u)$ and $y(u)$, of some complex variable $u$
such that: 
\begin{itemize}
\item[a)] They are single-valued in a domain 
${\sf D}\in \CC$, 
\item[b)] The equation $P(x(u),y(u))=0$ is 
satisfied identically for all $u\in {\sf D}$, 
\item[c)] Any solution
to the equation $P(x,y)=0$ is obtained in this way 
for some $u\in {\sf D}$.
\end{itemize}

Smooth curves of degree higher than 2 can not be uniformized
by elementary functions. For example, uniformization of curves
defined by an equation of the form  $y^2 =Q(x)$, where $Q(x)$ is
a polynomial of degree 3 or 4, requires elliptic functions.
In the case of degree 4 the canonical form of the equation is
\beq\label{uni1a}
y^2 =(1-x^2)(1-k^2 x^2),
\eeq
where $k$ is a parameter (the elliptic modulus). If $k\neq 0,1$,
the curve is a smooth elliptic curve (a torus).
It can be uniformized by the elliptic functions ${\rm sn}(w)$
${\rm cn}(w)$ and ${\rm dn}(w)$ defined in (\ref{uni3}):
\beq\label{uni2a}
x(w)={\rm sn}(w), \quad y(w)={\rm cn}(w)\, {\rm dn}(w)=x'(w).
\eeq
The elliptic modulus $k$ is given by (\ref{uni4}).

The main example is the elliptic curve
\beq\label{B5}
R^2(x^2y^2+1)-(x^2+y^2)+Vxy=0
\eeq
which emerges in the solution of the Yang-Baxter equation
for $4\! \times \! 4$ matrices of Boltzmann weights for the 
8-vertex model.
The rational change of variables $(x,y)\to (X,Y)$, where
$$
Y=Rx^2y +\frac{Vx-2Ry}{2R}, \quad
X=k^{-1/2}x \quad \mbox{with 
$\displaystyle{k+k^{-1}=R^2 +R^{-2} -\frac{V^2}{4R^2}}$}
$$
brings equation (\ref{B5}) to the canonical form (\ref{uni1a}),
i.e., $Y^2 =(1-X^2)(1-k^2 X^2)$.
In the original variables the uniformization (\ref{uni2a})
acquires the form
\beq\label{B6}
x(u)=\frac{\theta_1(u|\tau )}{\theta_4(u |\tau )}={\sf sn}\, (u), \qquad
y(u)=\frac{\theta_1(u+\eta |\tau )}{\theta_4(u+\eta |\tau )}
={\sf sn}\, (u+\eta ).
\eeq
The two constants $R$, $V$ are expressed in terms 
of two parameters $\eta$, $\tau$ as follows:
\beq\label{B7}
R=\frac{\theta_1(\eta |\tau )}{\theta_4(\eta |\tau )}=
{\sf sn}\, (\eta ) , \qquad
V=2\frac{\theta_4^2(0|\tau )
\theta_2(\eta |\tau )\theta_3(\eta|\tau  )}{\theta_2(0|\tau )
\theta_3(0|\tau )\theta_4^2(\eta |\tau )}=\frac{2\, {\sf cn}(\eta ) \,
{\sf dn}(\eta )}{{\sf cn} (0) \,
{\sf dn}(0)}.
\eeq
The elliptic modulus $k$ is given by (\ref{uni4}) (or (\ref{uni4a})).

Another example is the curve
\beq\label{B1}
y^2 -R_0^2 (x^2 +x^{-2})+V_0=0,
\eeq
or, in the polynomial form,
\beq\label{B1a}
x^2y^2 -R_0^2x^4 +V_0x^2 -R^2=0.
\eeq
In this case the rational change of variables that brings
it to the canonical form is
$$
Y=\frac{xy}{R_0}, \quad
X=k^{-1/2}x \quad \mbox{with 
$\displaystyle{k+k^{-1}=\frac{V_0}{R_0^2}}$}.
$$
In terms of the original variables the uniformization 
(\ref{uni2a}) reads:
\beq\label{B2}
x(u)=\sn (u), \qquad
y(u)=\gamma \frac{\cn (u) \dn (u)}{\sn (u)}
\eeq
and
\beq\label{B3}
R_0=\gamma \,
\cn (u) \dn (u), \qquad V_0=\gamma^2
\Bigl (\cn^4(u)+\dn^4 (u)\Bigr ),
\eeq
where $\gamma$ is an arbitrary constant. 
To verify validity of these formulas directly, one should
prove the identity
\beq\label{B4}
\theta_4^4(0)\frac{\theta_2^2(u)\theta_3^2(u)}{\theta_1^2(u)
\theta_4^2(u)}-\theta_2^2(0)\theta_3^2(0)\left (
\frac{\theta_4^2(u)}{\theta_1^2(u)}+\frac{\theta_1^2(u)}{\theta_4^2(u)}
\right )+\theta_2^4(0)+\theta_3^4(0)=0.
\eeq
For the proof we note that the left-hand side is an even elliptic
function of $u$ with possible poles at $u=0$ and $u=\frac{\tau}{2}$.
However, the expansion around these points shows that the singular terms
cancel and the function is regular everywhere. This means that it is
a constant. To find the constant one can substitute any value of $u$.
It is convenient to take $u=\frac{1}{2}$. Using the transformation 
properties (\ref{hp}), one finds that the constant is zero.

\section*{Acknowledgments}
\addcontentsline{toc}{section}{Acknowledgments}

I am grateful to A. Savchenko for collaboration in \cite{SZ26a}
and A. Zotov for illuminating discussions on the Baxter $R$-matrix.

This article is an output of the research project 
(HSE-BR-2025-059) implemented as a part of the 
Basic Research Program at the National Research University 
Higher School of Economics (HSE University).


\begin{thebibliography}{99}

\addcontentsline{toc}{section}{References}

\bibitem{Baxter}
R. Baxter, {\it Exactly solved models in statistical mechanics},
Academic Press, 1982.

\bibitem{Baxter71}
R. Baxter, {\sl Eight-vertex model in lattice statistics},
Phys. Rev. Lett. {\bf 26} (1971) 832–-833.

\bibitem{Baxter72}
R. Baxter, {\sl Partition function of the eight-vertex lattice model},
Annals of Physics {\bf 70} (1972) 193–-228.

\bibitem{Baxter73} R. Baxter, {\sl Eight-Vertex Model in Lattice Statistics
and One-Dimensional Aniso\-tropic Heisenberg
Chain. I. Some Fundamental Eigenvectors,
II. Equivalence to a Generalized Ice-type Lattice Model,
III. Eigenvectors of the Transfer Matrix and Hamiltonian},
Annals of Physics {\bf 76} (1973) 1--24, 25--47, 48--71.

\bibitem{FT79} L. Takhtajan and L. Faddeev, {\sl The quantum method of
the inverse problem and the Heisenberg
$XYZ$ model}, % Uspekhi Mat. Nauk {\bf 34:5} (1979) 13--63 (English translation:
Russ. Math. Surveys {\bf 34:5} (1979) 11–-68.

\bibitem{Gaudin-book}
M. Gaudin, \textsl{La fonction d’onde de Bethe}, Masson, 1983.

\bibitem{KBI93} V.E. Korepin, N.M. Bogoliubov,
A.G. Izergin, {\sl Quantum Inverse Scattering Method and
Correlation Functions}, Cambridge: Cambridge Univ.
Press, 1993.

\bibitem{Slavnov-book}
N.A. Slavnov, {\it Algebraic Bethe Ansatz and Correlation Functions}, 
World Scientific, Singapore, 2022.


\bibitem{DJKM83} E. Date, M. Jimbo, M. Kashiwara and T. Miwa, {\it Transformation groups for soliton equations}, 
in: M. Jimbo, T. Miwa (Eds.), Nonlinear Integrable Systems --
Classical and Quantum, World Scientific, 1983, pp. 39--120.

\bibitem{JM83} M. Jimbo and T. Miwa, {\it Solitons and 
infinite dimensional Lie
algebras}, Publ. Res. Inst. Math. Sci. Kyoto {\bf 19} (1983) 943--1001.

\bibitem{HBbook} 
J. Harnad and F. Balogh, {\it Tau functions and their applications},
Cambridge Monographs on Mathematical Physics, Cambridge University
Press, 2021.

\bibitem{UT84} K. Ueno and K. Takasaki, {\it Toda lattice hierarchy},
Adv. Studies in Pure Math. {\bf 4} (1984) 1--95.



\bibitem{DJKM81} E. Date, M. Jimbo, M. Kashiwara and 
T. Miwa, {\it Transformation groups
for soliton equations III}, J. Phys. Soc. Japan {\bf 50} (1981) 3806--3812.

\bibitem{KL93} V. Kac and J. van de Leur, 
{\it The $n$-component KP hierarchy and 
representation theory}, in: A.S. Fokas, V.E. Zakharov (Eds.), 
Important Developments
in Soliton Theory, Springer-Verlag, Berlin, Heidelberg, 1993.



\bibitem{Teo11} L.-P. Teo, {\it The multicomponent KP hierarchy: 
differential Fay identities and Lax
equations}, J. Phys. A: Math. Theor. {\bf 44} (2011) 225201.

\bibitem{TT07} K. Takasaki and T. Takebe, {\it Universal Whitham hierarchy,
dispersionless Hirota equations and multicomponent KP hierarchy}, 
Physica D {\bf 235} (2007) 109--125.

\bibitem{TZ25}
T. Takebe, A. Zabrodin, {\it Multi-component Toda lattice hierarchy},
Russian Mathematical Surveys, {\bf 80:4} (2025)
591--665.







\bibitem{HO} R. Hirota and Y. Ohta, {\it Hierarchies of coupled 
soliton equations I}, J. Phys. Soc. Japan {\bf 60} (1991) 798-809.

\bibitem{AHM} M. Adler, E. Horozov and P. van Moerbeke, {\it 
The Pfaff lattice and skew-orthogonal polynomials}, Int. Math. Res.
Notices {\bf 1999} (1999), no 11, 569--588.

\bibitem{ASM} M. Adler, T. Shiota and P. van Moerbeke, {\it
Pfaff $\tau$-functions}, Math. Ann. {\bf 322} (2002) 423--476.


\bibitem{Kakei} S. Kakei, {\it Orthogonal and symplectic matrix integrals
and coupled KP hierarchy}, J. Phys. Soc. Japan {\bf 99} (1999) 2875--2877.



\bibitem{IWS} S. Isojima, R. Willox and J. Satsuma, {\it On various 
solutions of the coupled KP equation}, J. Phys. A: Math. Gen.
{\bf 35} (2002) 6893--6909.

\bibitem{Willox} R. Willox, {\it On a generalized Tzitzeica 
equation}, Glasgow Math. J. {\bf 47A} (2005) 221--231.




\bibitem{Kodama} Y. Kodama and K.-I. Maruno, {\it $N$-soliton 
solutions to the DKP hierarchy and the Weyl group actions}, 
J. Phys. A: Math. Gen. {\bf 39} (2006) 4063--4086.

\bibitem{Vandeleur} J. van de Leur, {\it 
Matrix integrals and the geometry of spinors},
J. Nonlinear Math. Phys. {\bf 8} (2001) 288-310.

\bibitem{Orlov}  A. Orlov, {\it 
Deformed Ginibre ensembles and integrable systems}, 
Phys. Lett. {\bf A 378} (2014) 319-328. 

\bibitem{SZ24}
A. Savchenko, A. Zabrodin, {\it Multicomponent DKP 
hierarchy and its dispersionless limit},
Lett. Math. Phys. {\bf 115:22} (2025).

\bibitem{SZ25a}
A. Savchenko, A. Zabrodin,
{\it Multi-component Pfaff-Toda hierarchy within bilinear formalism},
arXiv:2511.10779.




\bibitem{TT95}
K. Takasaki and T. Takebe, {\it Integrable hierarchies and dispersionless
limit}, Rev. Math. Phys. {\bf 7} (1995) 743--808. 

\bibitem{Takasaki07} K. Takasaki, {\it 
Differential Fay identities and auxiliary linear 
problem of integrable hiearchies},  
Advanced Studies in Pure Mathematics {\bf 61} (2011) 387--441.

\bibitem{Takasaki09} K. Takasaki, {\it Auxiliary linear problem,
difference Fay identities and dispersionless 
limit of Pfaff-Toda hierarchy},
SIGMA {\bf 5} (2009) 109.



\bibitem{AZ14} V. Akhmedova and A. Zabrodin, {\it Dispersionless
DKP hierarchy and elliptic L\"owner equation}, J. Phys. A: Math. Theor.
{\bf 47} (2014) 392001, arXiv:1404.5135.

\bibitem{AZ15}
V. Akhmedova, A. Zabrodin,
{\it Elliptic parametrization of Pfaff integrable hierarchies
in the zero dispersion limit}, 
Theor. Math. Phys. {\bf 185} (2015) 410--422,
arXiv:1412.8435.

\bibitem{SZ25b}
A. Savchenko, A. Zabrodin,
{\it Dispersionless version of multi-component Pfaff-Toda hierarchy},
Lett. Math. Phys. {\bf 116:51} (2026).

\bibitem{SZ26a}
A. Savchenko, A. Zabrodin, {\it 
Integrable hierarchies with zero dispersion and elliptic curves},
arXiv:2606.01354.

\bibitem{Z24}
A. Zabrodin, {\it Dispersionless version of 
the multicomponent KP hierarchy revisited},
Physica D {\bf 467} (2024) 134286,
arXiv:2404.10406. 








\bibitem{Miwa82}
T. Miwa, {\it On Hirota's difference equations}, Proc. Japan Acad. 
{\bf 58} Ser. A (1982) 9--12.

\bibitem{Shigyo13}
Y. Shigyo, {\it On addition formulae of KP, mKP
and BKP hierarchies}, SIGMA {\bf 9} (2013) 035.










\bibitem{LP24}
M. de Leeuw, V. Posch, {\it All $4\times 4$ solutions
of the quantum Yang-Baxter equation}, 
https://arxiv.org/abs/2411.18685v3.

\bibitem{SZ-to-appear}
A. Savchenko, A. Zabrodin, to appear.

\bibitem{KZ15}
S. Kharchev and A. Zabrodin, 
{\it Theta vocabulary I}, 
Journal of Geometry and Physics, {\bf 94} (2015) 19--31,
arXiv:1502.04603.

\bibitem{WW}
E.T. Whittaker, G.N. Watson, {\it A course of modern analysis},
Cambridge University Press, 1927.

\bibitem{Akhiezer}
N.I. Akhiezer, {\it Elements of the Theory of Elliptic 
Functions}, Translation of Mathematical Monographs, 
Vol. 79, (AMS, Providence RI, 1990). 

\bibitem{Takebe-book}
T. Takebe, {\it Elliptic Integrals and Elliptic Functions},
Moscow Lectures, volume 9, Springer, 2023.

\bibitem{AHZ97}
A. Antonov, K. Hasegawa, A. Zabrodin, {\it
On trigonometric intertwining vectors and non-dynamical $R$-matrix for
the Ruijsenaars model}, Nucl. Phys. {\bf B503} (1997) 747--770. 


\bibitem{LOZ14}
A. Levin, M. Olshanetsky, A. Zotov, {\it 
Classical integrable systems and soliton 
equations related to eleven-vertex $R$-matrix},
Nucl. Phys. {\bf B887} (2014) 400--422.

\bibitem{Belavin81}
A. Belavin, 
{\it Dynamical symmetry of integrable quantum systems},
Nucl. Phys. {\bf B180} (1981) 189–-200.

\bibitem{chiralPotts}
R. Baxter, J. Perk, H. Au-Yang, {\it New solutions of the star-triangle relations for the chiral potts model}, Phys. Lett.  
{\bf A128} (1988) 138–-142. 

\bibitem{Zotov15}
A.V. Zotov, {\it Associative Yang-Baxter equation and R-matrix identities: applications to integrable systems}, to appear.


\end{thebibliography}
\end{document}